\RequirePackage{fix-cm}

\documentclass[twocolumn]{svjour3}
\smartqed
\usepackage{graphicx}
\usepackage[hidelinks]{hyperref}
\usepackage{tabularx}
\usepackage{booktabs}
\usepackage{amsmath}
\usepackage{amsfonts}
\usepackage{algorithm}
\usepackage{algpseudocode}
\usepackage{pifont}
\usepackage{multirow}
\usepackage[table,xcdraw]{xcolor}
\usepackage{subcaption}
\usepackage{tablefootnote}
\usepackage[numbers,sort]{natbib}
\usepackage[inline]{enumitem}
\usepackage[normalem]{ulem}

\usepackage[final,todonotes={textsize=tiny}]{changes}
\setcommentmarkup{{\IfIsColored{\color{authorcolor}}{}}}
\sethighlightmarkup{{\IfIsColored{\color{authorcolor}}{}}}
\definechangesauthor[name={R1}, color=blue]{R1}
\definechangesauthor[name={R2}, color=blue]{R2}
\definechangesauthor[name={R3}, color=blue]{R3}

\newcommand{\quotes}[1]{`#1'}
\newcommand{\dquotes}[1]{``#1''}

\newcommand{\dbpedia}{DBPedia}
\newcommand{\stringdb}{STRING}

\newcommand{\avantgraph}{\texttt{AvantGraph}}

\newcommand{\avantgraphu}{\texttt{AvantGraph$_u$}}
\newcommand{\agu}{\texttt{AG$_u$}}
\newcommand{\avantgraphs}{\texttt{AvantGraph$_o$}}
\newcommand{\ags}{\texttt{AG$_o$}}
\newcommand{\avantgraphwg}{\texttt{AvantGraph$_s$}}
\newcommand{\agwg}{\texttt{AG$_s$}}

\newcommand{\millenniumdb}{\texttt{MillenniumDB}}
\newcommand{\mdb}{\texttt{MDB}}
\newcommand{\millenniumdbv}{\texttt{MillenniumDB 0.2}}

\newcommand{\duckdb}{\texttt{DuckDB}}
\newcommand{\ddb}{\texttt{DDB}}
\newcommand{\duckdbv}{\texttt{DuckDB 0.8.1}}

\newcommand{\postgres}{\texttt{PostgreSQL}}
\newcommand{\pg}{\texttt{PG}}
\newcommand{\postgresv}{\texttt{PostgreSQL 16.0}}

\newcommand{\virtuoso}{\texttt{Virtuoso}}
\newcommand{\vt}{\texttt{VT}}
\newcommand{\virtuosov}{\texttt{Virtuoso 7.2}}

\newcommand{\kuzu}{\texttt{K\`uzu}}
\newcommand{\umbra}{\texttt{Umbra}}
\newcommand{\duckpgq}{\texttt{DuckPGQ}}
\newcommand{\tigergraph}{\texttt{TigerGraph}}

\newcommand{\ebp}[1]{\widehat{p}_{#1}}
\newcommand{\abp}[1]{\overline{p}_{#1}}
\newcommand{\ebpu}{\ebp{u}}
\newcommand{\ebpo}{\ebp{o}}
\newcommand{\abpu}{\abp{u}}
\newcommand{\abpo}{\abp{o}}

\newcommand{\timeplan}[1]{t_{\texttt{opt}}(#1)}
\newcommand{\timeexec}[1]{t_{\texttt{exe}}(#1)}

\newcommand{\uq}{\mathbb{U}_Q}
\newcommand{\sq}{\mathbb{O}_Q}

\newcommand{\scra}[2]{\overset{\scriptscriptstyle \rightarrow}{#1}\vphantom{#1}^{\scriptscriptstyle #2}}
\newcommand{\scrp}[2]{\overset{\scriptscriptstyle \leftarrow}{#1}\vphantom{#1}^{\scriptscriptstyle #2}}

\newcommand{\potcard}{PC}
\newcommand{\pottime}{PT}
\newcommand{\actcard}{AC}
\newcommand{\acttime}{AT}

\algtext*{EndWhile}
\algtext*{EndIf}
\algtext*{EndFor}
\algrenewcommand\algorithmicindent{1.0em}

\newcounter{qeqc}
\newcounter{deqc}

\makeatletter
\newenvironment{qequation}
{\stepcounter{qeqc}
 \def\Hy@chapterstring{equation}
 \def\Hy@chapapp{qequation}
\equation}
{\endequation}
\makeatother

\makeatletter
\newenvironment{dequation}
{\stepcounter{deqc}
 \def\Hy@chapterstring{equation}
 \def\Hy@chapapp{dequation}
\equation}
{\endequation}
\makeatother

\begin{document}

\setlength{\abovedisplayskip}{3pt}
\setlength{\belowdisplayskip}{3pt}

\title{Optimizing Navigational Graph Queries}

\author{Thomas Mulder \and
        George Fletcher \and 
        Nikolay Yakovets
}

\institute{
    T. Mulder (0009-0002-5847-5941) \at
    Technische Universiteit Eindhoven, Eindhoven, Netherlands
    \email{t.mulder@tue.nl}
    \and
    G. Fletcher (0000-0003-2111-6769) \at
    Technische Universiteit Eindhoven, Eindhoven, Netherlands
    \email{g.h.l.fletcher@tue.nl}
    \and
    N. Yakovets (0000-0002-1488-1414) \at
    Technische Universiteit Eindhoven, Eindhoven, Netherlands
    \email{n.yakovets@tue.nl}
}

\date{Received: date / Accepted: date}

\maketitle

\begin{abstract}
\sloppy
We study the optimization of navigational graph queries in the form of the Regular Queries (RQs), i.e., queries which combine recursive and pattern-matching fragments. Current approaches to their evaluation are not effective in practice.
Towards addressing this, we present a number of novel powerful optimization techniques which aim to constrain the intermediate results during query evaluation.
We show how these techniques can be planned effectively and executed efficiently towards the first practical evaluation solution for complex navigational queries on real-world workloads.
Indeed, our experimental results show several orders of magnitude improvement in query evaluation performance over state-of-the-art techniques on a wide range of queries on diverse datasets.
\keywords{Information systems \and Query planning \and Query optimization \and Graph-based database models}
\end{abstract}

\section{Introduction}
\label{sec:intro}
\sloppy
\added[id=R2,comment={D1}]{State-of-the-art database systems are effectively unable to evaluate complex navigational queries in practice and at scale~\cite{wdbench}.
In this work, we aim to overcome this severe limitation by introducing new optimization techniques for navigational queries with the ultimate goal of finally enabling their evaluation in practice and at scale.}

\added[id=R2,comment={D1}]{Complex navigational queries combine pattern-matching- and recursive path queries~\cite{DBLP:journals/vldb/BonifatiMT20}. More specifically, navigational graph queries in the form of the Regular Queries (RQ)~\cite{DBLP:journals/mst/ReutterRV17}, combine the recursive aspect of path-based languages such as the Regular Path Queries (RPQ)~\cite{DBLP:conf/sigmod/CruzMW87} and the pattern-matching aspect of sub-graph isomorphism queries~\cite{DBLP:journals/jacm/CorneilG70}. This combination of features comprises the very core of the Graph Query Language (GQL), which was recently standardized by ISO and is the product of a collaborative effort between industry and academia to define a \textit{lingua franca} for graph databases~\cite{FrancisGGLMMMPR23,DBLP:conf/sigmod/DeutschFGHLLLMM22}. As such, the efficient evaluation of this type of query is of fundamental importance to the future of graph-data management. Some existing graph-query languages such as SPARQL support only conjunctions over RPQs while others, such as Cypher, are being extended towards the GQL standard}\footnote{As of version 5.9 in the form of quantified path patterns, but without support for nested recursion.}. \added[id=R2,comment={D1}]{The functional language Gremlin is able to express the RQs in full due to being Turing-complete~\cite{DBLP:conf/dbpl/Rodriguez15}.}

\added[id=R2,comment={D1}]{Pattern-matching queries ask for all embeddings of a given query graph in a data graph. Path queries ask for all pairs of start- and end-points of paths formed by finite repetitions of a query pattern (the number of repetitions might not be known at query-time).
In other words, the second type asks for the \textit{transitive closure} of a binary relation which may be obtained by projecting the result of any graph query to a pair of variables.} 

Powerful optimization opportunities arise from the interplay of these two paradigms which together form the complex navigational queries arising in practical graph data management applications. These opportunities have not been exploited to their full potential in the state-of-the-art. 
As we show in this paper, systematically realizing these opportunities enables 
significant
(up to several orders of magnitude) performance improvements over the state-of-the-art on a wide range of queries on diverse datasets. 
We next outline our contributions which enable these important performance gains, towards realizing practical graph querying in contemporary graph applications.

\paragraph{Contribution (1): Novel optimization techniques}
The foundation of our optimizations is that they leverage selectivity, as introduced by filter- and join-predicates, from a query graph that contains transitive closures to constrain the evaluation of those transitive closures and thus improve the overall evaluation performance. We introduce novel optimizations of this type that generalize to closures where both variables participate in join-predicates as part of a conjunctive query. Optimizing such closures in this way is beyond state-of-the-art methods. We will refer to constraining transitive closures using filter- and join-predicates as \textit{seeding}.

For example, consider a social network comprised of people and the friend- and colleague relationships between them, and a query that asks for those pairs of people that are both direct or indirect friends and colleagues. Using state-of-the-art methods, both the friends- and colleagues transitive closures must be evaluated fully, and subsequently intersected. A seeded plan which leverages the potential selectivity of a join between friends and colleagues in order to constrain the evaluation of both transitive closures (e.g., the friends closure is evaluated starting only at those people for whom there exists a friend that also participates in the colleague relationship and vice versa for the colleagues closure) is not considered. Additionally, we observe that when more than two such transitive closures are present in a navigational query, seeding can be applied in a \textit{nested} fashion which \textit{stacks the selectivity of multiple closures}. These novel optimizations techniques are explained in more detail in \autoref{sec:interior-seeding}.

\paragraph{Contribution (2): Graph-structured query plans}
We present a unified graph-structured model for the representation of query plans which can incorporate both tree-based and automata-based optimizations. Query plans- and optimization techniques for queries that contain transitive closures have often been (partially) represented using formalism such as automata \cite{DBLP:conf/sigmod/YakovetsGG16,DBLP:conf/icde/ArroyueloHNR22} or ad-hoc procedural (sub)routines \cite{DBLP:conf/sigmod/JachietGGL20}, instead of the more conventional \textit{tree-structured} plans. This complicates, or even excludes, the cross-optimization of recursive- and non-recursive parts of a query and the use of well-known optimization techniques for tree-structured plans. Graph-structured query plans are explained in more detail in \autoref{sec:query_plan_rep}.

\paragraph{Contribution (3): Provably efficient enumeration}
We outline a solution to the enumeration problem using a \textit{fast}, \textit{scalable} and \textit{extendable} enumerator featuring a \textit{top-down} design relying on \textit{memoization} which guarantees the optimality of the chosen plan with respect to the cost model.
Many logical equivalences exist for logical operations in database queries. These equivalences lead to a multitude of semantically equivalent \textit{logical query plans} for any query. The fact that two semantically equivalent plans can display drastically different performance on the same database instance necessitates the existence of a \textit{query optimizer} as part of any database system. The optimizer's task is to pick, for any combination of input query and database instance, an \quotes{optimal} query plan. Therefore, a major sub-task of the optimizer is to \textit{enumerate} a set of semantically equivalent query plans for the input query (enumeration of query plans should not be confused with enumeration of query \textit{results}). The plans in this set can then be \emph{costed} against the database instance in an attempt to pick a good query plan for that instance. Given an input query and a finite set of logical equivalences we wish to consider, we define the \textit{enumeration problem} as the problem of enumerating all semantically equivalent query plans that can be obtained by any finite, acyclic sequence of applications of the logical equivalences.

Solving the enumeration problem for recursive queries over graph-structured plans is not trivial. Enumerating plans with tree-breaking features such as operators with multiple consumers and \quotes{cross-optimizable} fix-point procedures (i.e., not black boxes) requires a degree of flexibility of an enumerator that is difficult to achieve using a conventional design. We design a \textit{top-down} enumerator that relies on \textit{memoization}, embeds instances of recursive sub-problems (i.e., sub-queries to be optimized) directly into partial solutions and which can guarantee the optimality of the chosen plan with respect to a cost model in \autoref{sec:enum}.

Finally, we study the complexity of the enumeration procedure and investigate in particular the complexity added by the proposed optimizations. Most importantly, we prove that adding the proposed optimizations increases the complexity only up to a \textit{constant factor}, which effectively enables the application of these optimizations in practice. The complexity of the enumeration procedure is discussed in more detail in \autoref{sec:complexity}.

\paragraph{Contribution (4): Practical feasibility}
We show how to implement the described model of query plans, top-down enumeration procedure and query optimization techniques related to seeding
in an industrial graph-database engine. Neither the optimization techniques, the query plan model nor the enumerator's design are particular to this engine and all could be applied as part of any other system. 

We demonstrate enumeration of a diverse and hitherto unexplored set of query plans that employ seeding, and present empirical evidence showing that enumerating such plans is feasible in practice and that cost-based optimization can be employed effectively to reduce the time spent on- and amount of data processed during query evaluation by up to three orders of magnitude. Feasibility and performance are discussed in \autoref{sec:experiments}.

\paragraph{Contribution (5): Support for Regular Queries}
Seeding has not been studied in the context of the Regular Queries (RQs) which are strictly more expressive than UCRPQs, are closed under the transitive closure operation, have a decidable query containment problem, are at the heart of industrial-strength languages such as SPARQL and GQL, and are therefore the natural choice of query formalism for navigational queries.
The increased expressivity of RQs was shown to be incredibly useful in many practical application areas~\cite{DBLP:journals/mst/ReutterRV17}.

For example, consider a financial network comprised of people and accounts and the ownership- and transaction relationships between them. A query that asks for all pairs of people that own accounts for which there exists a direct or indirect transaction relationship from one account to the other and that any intermediary accounts have a direct or indirect transaction relationships with the same fixed account, is a query that can be represented as an RQ, but not as a UCRPQ because of the predicate that is imposed upon every intermediary account. Such queries are important for validation- or fraud detection purposes.

In summary, graph-structured query plans provide the infrastructure necessary to express existing- and novel optimization techniques for navigational graph queries. The novel techniques are the application of seeding to RQs, to transitive closures of binary relations where both variables participate in join-predicates as part of a larger query, and the ability to apply seeding in a nested fashion when multiple transitive closures are affected. We show that these novel optimizations are powerful and can be applied effectively.

The structure of this paper is as follows. Property graphs are introduced as the data model and the Regular Queries are defined in \autoref{sec:spec}. The concept of seeding and two novel extensions to it are discussed in \autoref{sec:seeding}. The enumerator is discussed in terms of its design, the way it represents query plans and the rules it uses in \autoref{sec:enum}, alongside the complexity of the enumeration procedure. The experimental results and a case study are presented in \autoref{sec:method}, \autoref{sec:use_case_study}. Finally, the related work and conclusions are discussed in \autoref{sec:related} and \autoref{sec:conclusion}.

\section{Data model and query specification}
\label{sec:spec}

\subsection{Data model}
\label{sec:spec-data}
Without loss of generality, we adopt the data model of property graphs, recently standardized by ISO \cite{DBLP:series/synthesis/2018Bonifati,FrancisGGLMMMPR23}. Let $\mathcal{K}$ be a finite set of property keys, $\mathcal{N}$ a finite set of property values, $I \subset \mathbb{N}$ a set of numerical identifiers, and $I_V$ and $I_E$ finite subsets of $I$ such that $I_V \cap I_E = \emptyset$ called vertex- and edge identifiers, respectively. A property graph is defined as a pair $G = (E, P)$ with $E \subseteq I_V \times I_E \times I_V$ and $P \subseteq (I_V \cup I_E) \times \mathcal{K} \times \mathcal{N}$. That is, $E$ is a set of triples of the form $(s, e, t)$ where $s$ and $t$ are vertex identifiers representing the end-points of a directed edge from $s$ to $t$, while $e$ uniquely identifies an edge, and $P$ is a set of triples of the form $(o, k, v)$ where $o$ is a vertex- or edge identifier, $k$ a property key and $v$ a property value.

\subsection{Regular Queries}
\label{sec:spec-rqs}
The Regular Queries (RQs) extend non-recursive Datalog with a transitive closure operator on binary predicates. 
The RQs are a natural, rich and well-behaved query language for graphs:
a) RQs are closed under the transitive closure operation; b) RQs generalize  and extend existing fundamental graph query languages such as unions of conjunctive 2-way regular path queries (UC2RQPs) \cite{DBLP:conf/kr/CalvaneseGLV00/uc2rpq} and unions of conjunctive nested 2-way regular path queries (UCN2RPQs)~\cite{DBLP:conf/amw/BarceloPR12/uc2nrpq}; and c) the query containment problem is decidable for RQs,
 opening the door to principled and formally well-grounded optimization solutions. The evaluation problem for RQs is \textsc{NP}-complete in combined complexity and \textsc{NLogSpace}-complete in data complexity. The semantics of RQs are inherited from the semantics of Datalog in the natural way~\cite{DBLP:journals/mst/ReutterRV17}.

\subsubsection{Datalog}
Datalog is a declarative, logic-based language in which rule-based \textit{programs} can be specified that derive all \textit{facts} (i.e., tuples) that follow logically from a set of given facts and the rules in the program \cite{DBLP:journals/tkde/CeriGT89,DBLP:journals/ftdb/GreenHLZ13}.

A Datalog \textit{program} is a set of Datalog \textit{rules}. Each rule is an expression of the form:
\begin{equation}
  p(t_1, ..., t_n) \,\leftarrow\, L_1, L_2, ..., L_m \nonumber
    \label{eq:datalog_rule}
\end{equation}
where $p(t_1, ..., t_n)$ is the \textit{head} of the rule with \textit{terms} $t_1, ..., t_n$ and  \textit{literals} $L_i$ for $1 \leq i \leq m$. The conjunction $L_1, L_2, ..., L_m$ is referred to as the \textit{body} of the rule. A term is either a variable or a constant. A \textit{predicate} is the name of a rule or of a given set of facts. An \textit{atom} is a predicate along with a list of terms (i.e., $p(t_1, ..., t_n)$ is an atom with predicate $p$ and list of terms $t_1, ..., t_n$). For our purposes, a literal is always an atom, since the use of \textit{negative} literals and the non-monotonic reasoning that follows from them are beyond the scope of this work.

A predicate is either \textit{extensional} or \textit{intensional}. An extensional predicate refers to a given (at query time) set of facts. In other words, a set of extensional predicates together can be said to capture a database instance, just like a set of \textit{tables} capture a database instance in a relational-database setting. An intensional predicate is any predicate that is not extensional and is used as the head of one or more rules.

The informal semantics of evaluating a rule $p(t_1, ..., t_n) \leftarrow L_1, L_2, ..., L_m$ are to produce a set called $p$ of tuples whose values bind to $t_1, ..., t_n$ by replacing all variables in the body by values for which the conjunction of the literals in the body is true.

\begin{figure*}[ht]
    \centering
    \includegraphics[width=\textwidth]{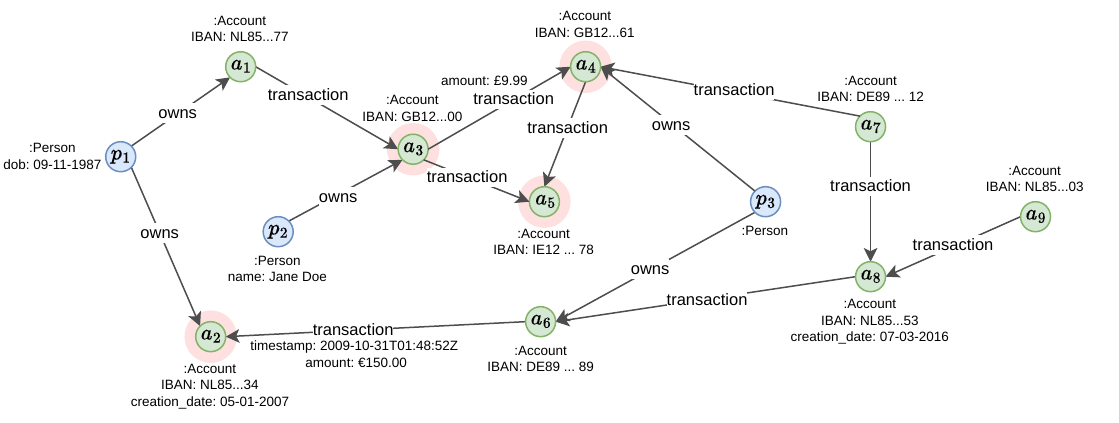}
    \caption{A property graph capturing financial data. Nodes highlighted in red comprise the seed $S$ from \autoref{prog:simple_seeded}.}
    \label{fig:prop_graph_example}
\end{figure*}

\subsubsection{Example Regular Query}
Let $G = (E, P)$ be a property graph with $E$ capturing the \textit{topology} of the graph and $P$ it's \textit{properties}. Throughout the remainder of this paper, we will refer to the extensional predicates $E(s, e, t)$ and $P(o, k, v)$ when defining Regular Queries and Datalog programs. Together these predicates capture the graph $G$.

Consider again the financial network and the natural language query on it from the introduction. An instance of such a network is displayed in \autoref{fig:prop_graph_example}. Vertices representing people and accounts are denoted $p_i$ and $a_j$, respectively. The query that asks for all pairs of people who own accounts for which there exists a path (i.e., sequence) of transaction relationships from one account to the other and that for all accounts on this path, there exists another path of transaction relationships to the account with IBAN \texttt{IE12 BOFI 9000 0112 3456 78}, is captured by \autoref{q:rq_trans} shown below as a Datalog program. Here, $R^+(s, t)$ denotes a transitive closure over binary relation $R(s, t)$.
\begin{qequation}
  \begin{aligned}[b]
    O(s, t) & \leftarrow E(s, e, t), P(e, \text{label}, \text{owns}) \\
    T(s, t) & \leftarrow E(s, e, t), P(e, \text{label}, \text{transaction}) \\
    F(s) & \leftarrow T^+(s, t), P(t, \text{IBAN}, \text{\texttt{IE12...78}}) \\
    I(x, y) & \leftarrow\, T(x, y), F(x) \\
    Ans(w, z) & \leftarrow\, O(w, x), I^+(x, y), O(z, y), F(y)
  \end{aligned}
  \label{q:rq_trans}
\end{qequation}
Because $I$ is defined in terms of a conjunction between $T$ and $F$, and \autoref{q:rq_trans} is defined in terms of $I^+$ this query cannot be captured by a UCN2RPQ, which do not allow for the transitive closure operation to be applied to the result of a conjunction. Hence, this query is an RQ but not a UCN2RPQ. One pair in the result of \autoref{q:rq_trans} on the property graph from \autoref{fig:prop_graph_example} is $(p_1, p_3)$. There does not exist a direct transaction from account $a_1$ to $a_5$ (the account with the desired IBAN), but the definition of $F(s)$ is in terms of $T^+$, not just $T$. A path of transactions exists from $a_1$ to $a_5$ via $a_3$.

\section{Seeding}
\label{sec:seeding}

Let $R(s, t)$ be a binary relation and $R^+(s, t)$ its transitive closure. A naive solution is to compute $R^+(s, t)$ \quotes{in full} by starting from $R(s, t)$ and repeatedly \textit{expanding} by one join with $R(s, t)$ until a fix-point is reached (i.e., no new pairs are added by another join with $R(s, t)$). This strategy is defined by Datalog \autoref{prog:linear_recursion}.

\begin{dequation}
    \begin{aligned}[b]
        R^+(s, t) & \leftarrow R(s, t) \\
        R^+(s, t) & \leftarrow R^+(s, x), R(x, t) \\
    \end{aligned}
    \label{prog:linear_recursion}
\end{dequation}

Whenever $R^+(s, t)$ is also the desired output of the query, computing the transitive closure in full cannot be avoided. However, as soon as the query contains other predicates besides the transitive closure (i.e., the closure is \textit{embedded} into a larger query), computing it in full is often not necessary. Instead, we can \textit{seed} a transitive closure to avoid the exploration of paths that do not join with the larger query into which it is embedded. We define four key concepts involved in seeding.

\begin{definition}
    A \textit{seed} $S$ is a set of nodes from a property graph that is used to restrict the transitive closure of a binary relation over the set of nodes in the graph.
\end{definition}

\begin{definition}
    A \textit{seeding query} $Q_s$ is a query whose result is used to obtain one or more seeds.
\end{definition}

\begin{definition}
    A \textit{seeding relation} $R(x_1,..., x_n)$ is the result of evaluating a seeding query on a property graph and it is an $n$-ary relation where each variable $x_i$ binds to a node in a graph.
\end{definition}

\begin{definition}
    Given a seed $S$, binary relation $T(x, y)$ and $T^+(x, y)$ its transitive closure, a \textit{seeded closure} $\scra{T}{S}$ for $T^+$ on $S$ containing only end-points of paths \textit{starting} at nodes in $S$ is defined as:
    \begin{equation*}
        \scra{T}{S} = \{\, (u, v) \,|\, (u, v) \in T^+ \wedge u \in S \,\} \cup \{\, (u, u) \,|\, u \in S \,\}
    \end{equation*}
    Symmetrically, a seeded closure $\scrp{T}{S}$ containing only end-points of paths \textit{ending} at nodes in $S$ is defined as:
    \begin{equation*}
        \scrp{T}{S} = \{\, (u, v) \,|\, (u, v) \in T^+ \wedge v \in S \,\} \cup \{\, (v, v) \,|\, v \in S \,\}
    \end{equation*}
\end{definition}

That is, $\scra{T}{S}$ is the union between the identity relation of $S$ and the subset of $T^+$ containing only those pairs of nodes $(u, v)$ where $u$ occurs in the seed $S$. The identity relation is included to guarantee that every tuple in the seeding relation $R$ from which $S$ was obtained joins with at least one pair in $\scra{T}{S}$.

Consider a query that asks for all pairs of people and accounts $(p, a_t)$ s.t. $p$ owns an account $a_s$ from which there exists a path of transactions to $a_t$. This query is captured as an RQ by \autoref{q:simple} (re-definitions of $O$ and $T$ omitted), and it is displayed diagrammatically as a query-graph in \autoref{fig:query_2}.
\begin{qequation}
  \begin{aligned}[b]
    Ans(x, z) & \leftarrow O(x, y), T^+(y, z)
  \end{aligned}
  \label{q:simple}
\end{qequation}

\autoref{q:simple} joins predicates $O(x, y)$ and $T^+(y, z)$ on variable $y$. This presents an opportunity to avoid computing $T^+$ in full. The strategy is as follows, we first define a \textit{seeding query} $Q_s$ as the join between $O(x, y)$ and the base of the transitive closure, which is $T(y, z)$. Evaluating this seeding query yields the \textit{seeding relation} $R$. Then, we obtain a seed $S$ by projecting $R$ to $z$ and compute a \textit{seeded closure} $\scra{T}{S}(y, z)$. Finally, we join the seeded closure and the seeding relation.
The set of nodes comprising the seed $S$ for the property graph from \autoref{fig:prop_graph_example} are highlighted in red. Note that more than half of the nodes representing accounts are \textit{not} part of $S$ and so no paths \textit{starting} at any of these nodes will be explored. This is the mechanism by which performance gains are made.\autoref{prog:simple_seeded} defines the seeding strategy as a Datalog program and its application is displayed diagrammatically in \autoref{fig:program_d2}.

\begin{dequation}
    \begin{aligned}[b]
    R(w, y) & \leftarrow O(w, x), T(x, y) \\
    S(y) & \leftarrow R(\_, y) \\
    \scra{T}{S}(y, y) & \leftarrow S(y) \\
    \scra{T}{S}(y, z) & \leftarrow \scra{T}{S}(y, v), T(v, z) \\
    Ans(w, z) & \leftarrow R(w, y), \scra{T}{S}(y, z)
    \end{aligned}
    \label{prog:simple_seeded}
\end{dequation}

The strategy of rewriting a transitive closure in terms of a seed which leverages conjunctions and filters from the query into which the transitive closure is embedded in order to reduce the number of tuples processed in the fix-point procedure is what we will refer to as \textit{seeding}. We distinguish two types of transitive closures in \autoref{sec:closure-types}, and show how one of the types can benefit from seeding using novel methods that are beyond the capabilities of state-of-the-art methods in \autoref{sec:interior-seeding}.

\subsection{Interior- and exterior closures}
\label{sec:closure-types}
In order to go beyond the known strategies for seeding~\cite{DBLP:conf/sigmod/YakovetsGG16,DBLP:conf/sigmod/JachietGGL20}, we distinguish two types of transitive closures, namely \textit{interior} and \textit{exterior} closures. An interior closure is any transitive closure predicate $T^+(s, t)$ where both $s$ and $t$ occur in at least one other (but not necessarily the same) predicate in the query. An exterior closure is any transitive closure predicate $T^+(s, t)$ where either $s$ or $t$, but not both, occur in at least one other predicate in the query. The transitive closure $T^+(y, z)$ from \autoref{q:simple} is an \textit{exterior} closure because $y$ occurs in $O(x, y)$, but $z$ does not occur in any predicate other than $T^+(y, z)$. Conversely, the transitive closure $I^+(x, y)$ from \autoref{q:rq_trans} is an \textit{interior} closure because $x$ occurs in $O(w, x)$ and $y$ occurs in $O(z, y)$.

\subsection{Seeding interior closures}
\label{sec:interior-seeding}
Existing approaches can seed only exterior closures because they require that one variable is \quotes{free} (i.e., does not occur in any other predicate) \cite{DBLP:conf/sigmod/YakovetsGG16,DBLP:conf/sigmod/JachietGGL20}. There may be opportunities to seed interior closures by evaluating the join-predicates on one of its variables only after the transitive closure has been evaluated. In this way, a variable is \quotes{freed} so that the closure can be seeded.
To avoid Cartesian products in query plans, we require that the seeding query's join-graph remains connected.

\begin{qequation}
    \begin{aligned}[b]
    X(s, t) & \leftarrow E(s, e, t), P(e, \text{label}, l_x) \\
    Y(s, t) & \leftarrow E(s, e, t), P(e, \text{label}, l_y) \\
    Z(s, t) & \leftarrow E(s, e, t), P(e, \text{label}, l_z) \\
    Ans(s, t) & \leftarrow X^+(s, t), Y^+(s, t), Z^+(s, t)
    \end{aligned}
    \label{q:pcc3}
\end{qequation}

Consider \autoref{q:pcc3} which asks for pairs of vertices $(s, t)$ such that there exist three paths of edges labelled $l_x, l_y$ and $l_z$ respectively, from $s$ to $t$. The query-graph for this query is displayed in \autoref{fig:query_3}. A possible strategy for this query would be to define the seeding relation $R$ as the conjunction of $X, Y$ and $Z$ \textit{only} on the variable $s$ but not on $t$ (or the other way around) and to seed \textit{each} transitive closure on $R$ projected onto a single (different) variable for $X, Y$ and $Z$. This strategy is defined by \autoref{prog:pcc3_seeded_1} and is displayed in \autoref{fig:program_d3}.

\begin{dequation}
  \begin{aligned}[t]
    & R(s, x, y, z) \leftarrow X(s, x), Y(s, y), Z(s, z) \\
    & S(x) \leftarrow R(\_, x, \_, \_) \\
    & T(y) \leftarrow R(\_, \_, y, \_) \\
    & U(z) \leftarrow R(\_, \_, \_, z) \\
    & \scra{X}{S}(x, x) \leftarrow S(x) \\
    & \scra{X}{S}(s, x) \leftarrow \scra{X}{S}(s, w), X(w, x) \\
    & \scra{Y}{T}(y, y) \leftarrow T(y) \\
    & \scra{Y}{T}(s, y) \leftarrow \scra{Y}{T}(s, w), Y(w, y) \\
    & \scra{Z}{U}(z, z) \leftarrow U(z) \\
    & \scra{Z}{U}(s, z) \leftarrow \scra{Z}{U}(s, w), Z(w, z) \\
    & {\scriptstyle Ans(s, t)} \leftarrow R(s, x, y, z), \scra{X}{S}(x, t), \scra{Y}{T}(y, t), \scra{Z}{U}(z, t)
  \end{aligned}
  \label{prog:pcc3_seeded_1}
\end{dequation}

In this case, seeding the interior closures is possible because the definition of $R(s, x, y, z)$ consists of three predicates $X, Y$ and $Z$ which have a common variable $s$ and so its join graph is connected. The interior closure $I^+(x, y)$ from \autoref{q:rq_trans} \textit{cannot} be seeded because $O(w, x)$ would become disconnected if the variable $x$ is freed, and both $O(z, y)$ and $F(y)$ would become disconnected (from the remainder of the query, not from each other) if the variable $y$ is freed.

\subsubsection{Stacking selectivity}
In \autoref{prog:pcc3_seeded_1}, all three of the transitive closures are expanded on top of the \textit{same} seeding relation. That is, the seeds $S, T$ and $U$ are all projections onto a single variable of the same seeding relation $R$. In this case, there is an opportunity to use a potentially \textit{more restrictive} (i.e., lower cardinality) seed for one of the three transitive closures. Consider the non-recursive rule \mbox{$\scra{Z}{U}(z, z) \leftarrow U(z)$} which populates $\scra{Z}{U}$ with the set of vertices that have an edge labelled $l_z$ incoming on them, such that the source of this edge has edges labelled $l_x$ and $l_y$ outgoing from it. Now consider the set of vertices that have an edge labelled $l_z$ incoming on them, such that the source of this edge has a \textit{path} of edges labelled $l_x$ and a \textit{path} of edges labelled $l_y$ outgoing from it, such that these two paths \textit{end up in the same vertex}. The second set is a subset of the first set, because two outgoing paths imply at least two outgoing edges, but two outgoing edges may not lead to paths that \textit{end up in the same vertex}. Therefore, one possible strategy to reduce the size of the seed and ultimately to further reduce the number of tuples processed in the fix-point procedure for $\scra{Z}{U}$ is to use the intersection of $\scra{X}{S}$ and $\scra{Y}{T}$ as a \textit{new seeding relation} to base the expansion of $\scra{Z}{U}$ on. This strategy \textit{stacks the selectivity} of the fact that the convergence of $i$ paths with $i > 1$ may be used to produce a more restrictive seed for the $i+1^{th}$ transitive closure. This strategy is defined by \autoref{prog:pcc3_seeded_2} (re-definitions of $X$, $Y$ and $Z$ are omitted) and is displayed in \autoref{fig:program_d4}.

\begin{dequation}
  \begin{aligned}[b]
    R(s, x, y, z) & \leftarrow X(s, x), Y(s, y), Z(s, z) \\
    S(x) & \leftarrow R(\_, x, \_, \_) \\
    T(y) & \leftarrow R(\_, \_, y, \_) \\
    \scra{X}{S}(x, x) & \leftarrow S(x) \\
    \scra{X}{S}(s, x) & \leftarrow \scra{X}{S}(s, w), X(w, x) \\
    \scra{Y}{T}(y, y) & \leftarrow T(y) \\
    \scra{Y}{T}(s, y) & \leftarrow \scra{Y}{T}(s, w), Y(w, y) \\
    W(s, w, z) & \leftarrow R(s, x, y, z), \scra{X}{S}(x, w), \scra{Y}{T}(y, w) \\
    U(z) & \leftarrow W(\_, \_, z) \\
    \scra{Z}{U}(z, z) & \leftarrow U(z) \\
    \scra{Z}{U}(s, z) & \leftarrow \scra{Z}{U}(s, v), Z(v, z) \\
    Ans(s, t) & \leftarrow W(s, t, z), \scra{Z}{U}(z, t)
  \end{aligned}
  \label{prog:pcc3_seeded_2}
\end{dequation}

This optimization is very powerful when the expansion of $\scra{Z}{U}$ represents a large proportion of the total amount of work done to evaluate the query, but the intersection between $X^+$ and $Y^+$ is selective. Regarding the applicability of this optimization it is important to note that the interior closures can \quotes{occur anywhere} in the input query and do not need to be adjacent, but \autoref{q:pcc3} makes for a succinct example because all three closures are adjacent.

Note that the choice to expand $\scra{Z}{U}$ \quotes{last} (i.e., on top of a more restrictive seed) was taken only for the sake of the example. Depending on the cardinalities of $X^+, Y^+$ and $Z^+$ it may be preferable to expand one of the other two closures last.
\begin{figure*}[ht]
    \centering
     \begin{subfigure}[t]{0.29\textwidth}
         \includegraphics[width=\linewidth]{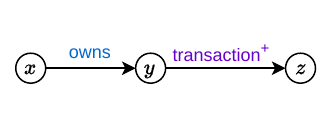}
         \caption{Query-graph for \hyperref[q:simple]{\ref*{q:simple}}}
         \label{fig:query_2}
     \end{subfigure}
     \hfill
     \begin{subfigure}[t]{0.29\textwidth}
         \includegraphics[width=\linewidth]{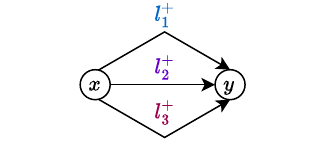}
         \caption{Query-graph for \hyperref[q:pcc3]{\ref*{q:pcc3}} \& \hyperref[eq:template_and_base_definitions]{PCC3}}
         \label{fig:query_3}
     \end{subfigure}
     \hfill
     \begin{subfigure}[t]{0.32\textwidth}
         \includegraphics[width=\linewidth]{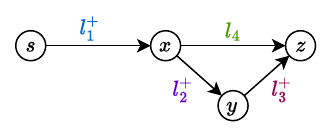}
         \caption{Query-graph for \hyperref[q:rules_ex]{\ref*{q:rules_ex}}}
         \label{fig:query_4}
     \end{subfigure}
     \\
     \begin{subfigure}[t]{0.32\textwidth}
         \includegraphics[width=\linewidth]{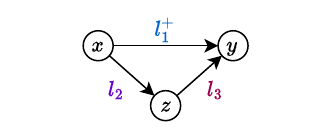}
         \caption{Query-graph for \hyperref[eq:template_and_base_definitions]{CCC1}}
         \label{fig:query_ccc1}
     \end{subfigure}
     \hfill
     \begin{subfigure}[t]{0.32\textwidth}
         \includegraphics[width=\linewidth]{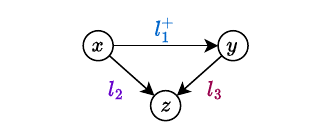}
         \caption{Query-graph for \hyperref[eq:template_and_base_definitions]{CCC2}}
         \label{fig:query_ccc2}
     \end{subfigure}
     \hfill
     \begin{subfigure}[t]{0.32\textwidth}
         \includegraphics[width=\linewidth]{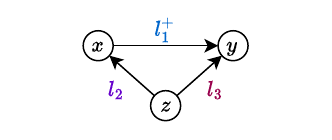}
         \caption{Query-graph for \hyperref[eq:template_and_base_definitions]{CCC3}}
         \label{fig:query_ccc3}
     \end{subfigure}
     \\
     \begin{subfigure}[t]{0.32\textwidth}
         \includegraphics[width=\linewidth]{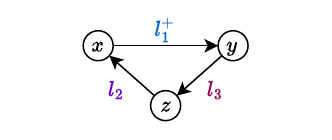}
         \caption{Query-graph for \hyperref[eq:template_and_base_definitions]{CCC4}}
         \label{fig:query_ccc4}
     \end{subfigure}
     \hfill
     \begin{subfigure}[t]{0.32\textwidth}
         \includegraphics[width=\linewidth]{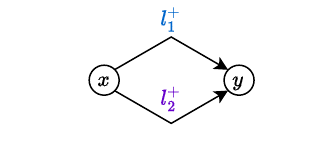}
         \caption{Query-graph for \hyperref[eq:template_and_base_definitions]{PCC2}}
         \label{fig:query_pcc2}
     \end{subfigure}
     \hfill
     \begin{subfigure}[t]{0.32\textwidth}
         \includegraphics[width=\linewidth]{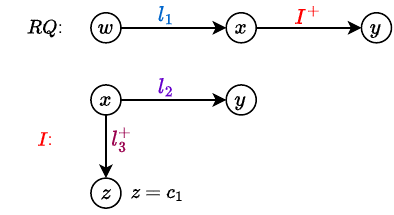}
         \caption{Query-graph for \hyperref[eq:template_and_base_definitions]{RQ}}
         \label{fig:query_rq}
     \end{subfigure}
     \caption{Query-graphs of various queries and templates}
     \label{fig:queries_overview}
\end{figure*}
\begin{figure}[ht]
    \centering
    \includegraphics[width=0.95\columnwidth]{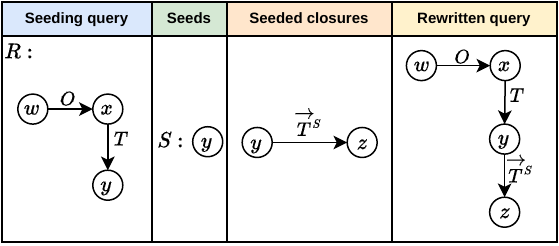}
    \caption{Application of seeding in \autoref{prog:simple_seeded}}
    \label{fig:program_d2}
\end{figure}
\begin{figure}[ht]
    \centering
    \includegraphics[width=0.95\columnwidth]{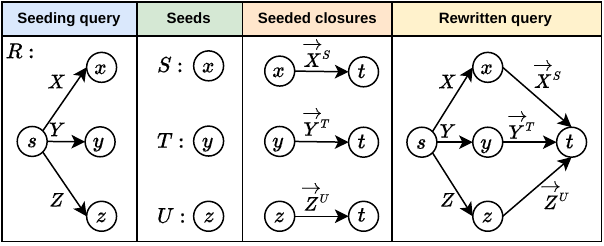}
    \caption{Application of seeding in \autoref{prog:pcc3_seeded_1}}
    \label{fig:program_d3}
\end{figure}
\begin{figure*}[ht]
    \centering
    \includegraphics[width=0.95\textwidth]{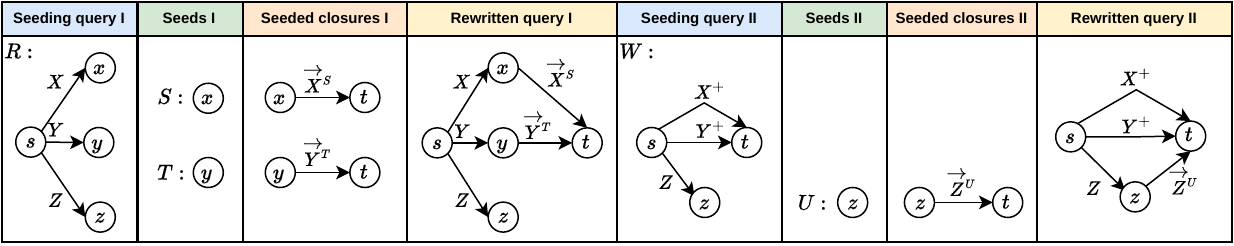}
    \caption{Application of seeding in \autoref{prog:pcc3_seeded_2}}
    \label{fig:program_d4}
\end{figure*}

\section{Enumeration}
\label{sec:enum}

Informally, \textit{enumeration} refers to the process of considering a set of equivalent query plans and the \textit{enumerator} refers to the component of the query optimizer that facilitates this process. We elaborate on our enumerator's design, how we had to go beyond conventional design, and how it represents the query plans it enumerates in \autoref{sec:enum-design}. In \autoref{sec:join-rule} and \autoref{sec:seed-rule}, two of the main \textit{enumeration rules} involved in optimizing navigational graph queries are presented. Finally, the complexity of the enumeration procedure is discussed in \autoref{sec:complexity}.

\subsection{Enumerator design}
\label{sec:enum-design}

A conventional way of enumerating (tree-structured) query plans is to use a \textit{bottom-up dynamic programming algorithm} which constructs query plans for increasingly larger sub-expressions of the input query by combining query plans for smaller sub-expressions which have already been constructed \cite{DBLP:conf/sigmod/MoerkotteDP}. Let $n$ be the size of an input query. While a query's size is typically directly proportional to the number of predicates it contains, we refrain from giving an exact definition here. A bottom-up enumerator would start by producing plans for sub-problems of size $1$, which would typically be reading \quotes{base} relations. Then, the algorithm would consider sub-problems of size $i$ with $2 \leq i \leq n$
in increasing order, and for each $i$ consider combining solutions to sub-problems of sizes $k$ and $i - k$ with $1 \leq k \leq i - 1$ into a solution to the sub-problem of size $i$.

A drawback of the bottom-up approach, is that many combinations of solutions to sub-problems of sizes $k$ and $i - k$ will not produce a semantically valid solution to the problem of size $i$. While checking the validity of such a combination may not be very expensive, it is a source of inefficiency nevertheless \cite{DBLP:conf/icde/FenderM11,DBLP:phd/de/Neumann2005a}. This problem is further compounded when we want to consider solutions to a sub-problem of size $i$ that consist of more than two solutions to smaller sub-problems.
Our top-down approach avoids this problem by embedding instances of sub-problems into partial solutions. Therefore, whenever a sub-problem is considered, solving it must produce a semantically valid solution.

Another convention is to have a \textit{rule-based} design. Such a design offers modularity and extensibility to the enumerator. The enumerator's overall design and the set of rules it uses together determine the set of query plans that will be considered for an input query. To extend the enumerator's capabilities, new query features can be easily supported by new rules, and new rules can easily be added to the set of rules.

We adopt the rule-based design for our enumerator. A set of enumeration rules typically contains at least one rule for each algebraic type (e.g., join, select, etc.) supported by the input query language. This ensures that a query plan can be found for any input query, regardless of the nesting of algebraic types. A rule captures the logic of plan construction. It takes as input an embedding of a sub-query in plan and outputs a set of zero or more plans. If a rule does not apply, it can be said to return the empty-set. \autoref{fig:rule_intro} shows the interface of enumeration rules diagrammatically. The use of the abstraction operator $\square(Q)$ is clarified in \autoref{sec:query_plan_rep}.
\begin{figure}[ht]
    \centering
    \includegraphics[width=\linewidth]{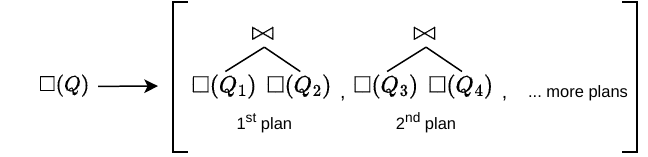}
    \caption{Interface of enumeration rules. $\square(Q)$ denotes an \textit{abstraction} over a query $Q$ (\autoref{sec:query_plan_rep})}
    \label{fig:rule_intro}
\end{figure}

In this paper, we refrain from giving an algebraic specification of the Regular Queries and instead continue to present queries, optimizations and enumeration rules using Datalog which is more convenient for a discussion on seeding. We do note that such algebras exist, for example the Regular Property Graph Algebra (RPGA) \cite{DBLP:series/synthesis/2018Bonifati}. In order to support the other constructs that comprise RQs besides those discussed in the subsequent sections, additional enumeration rules are required. Because these rules are not directly relevant to the topic of seeding, they are presented in an online appendix \cite{Ymous_Optimizing_Navigational_Queries} rather than the main body of this text.

\subsubsection{Query plan representation}
\label{sec:query_plan_rep}
Our model for representing query plans is a directed graph of logical operators.
A \textit{query plan} is an expression of the form $\mathcal{P} = (O, r)$ with $O$ a set of operators and $r \in O$ the root operator. The topology of a plan is captured by a recursive parent-child relationship.
Most operator types are defined in terms of one or more child operators. We define eleven types of operators to be used in query plans. Six of these operator types are well-known from the Relational Algebra, although they might differ slightly from Codd's original specification \cite{codd_ra}. Let $G = (V, E)$ be a property graph and $\mathbb{V}$ a countably infinite set of variable names. We informally define a \textit{join-predicate} as any expression that equates two variables (e.g., $x = y$ with $x, y \in \mathbb{V}$) and a \textit{filter-predicate} as any expression that equates a variable and a constant (e.g., $x = c$ with $x \in \mathbb{V}$). For the sake of simplicity, we will only consider equality predicates and integer constants (i.e., for any constant $c$ we have $c \in \mathbb{N}$), but note that all definitions extends naturally to other types of predicates and constants. We write $\mathbb{J}$ and $\mathbb{F}$ to denote the countably infinite sets of join- and filter-predicates, respectively. The eleven types are defined with respect to $G, \mathbb{V}, \mathbb{J}$ and $\mathbb{F}$ as follows:

\begin{itemize}
    \item $E(i)$ reads the set of edges $E$ and assigns a fresh identifier $i \in \mathbb{N}$ to distinguish multiple references to $E$ in the same query,
    \item $P(i)$ reads the set of properties $P$ and assigns a fresh identifier $i \in \mathbb{N}$ to distinguish multiple references to $P$ in the same query,
    \item $\bowtie(J, c_l, c_r)$ obtains an intermediate result from its left- and right child operators $c_l$ and $c_r$ and joins them based on a set of join-predicates $J \subset \mathbb{J}$,
    \item $\Pi(V, c)$ obtains an intermediate result from its child operator $c$ and projects each tuple only to the variables in the set of variables $V \subset \mathbb{V}$,
    \item $\rho(f, c)$ obtains an intermediate result from its child operator $c$ and renames variables according to a function $f$,
    \item $\sigma(F, c)$ obtains an intermediate result from its child operator $c$ and filters it based on the satisfaction of the conjunction of predicates in the set of filter-predicates $F \subset \mathbb{F}$,
    \item $\cup(c_1, ..., c_n)$ obtains intermediate results from its $n$ child operators and produces their union,
    \item $\alpha(b, c)$ obtains an intermediate result from its child operator $c$ and writes it to a buffer $b$,
    \item $\beta(b)$ obtains an intermediate result by reading from a buffer $b$,
    \item $\delta(c)$ obtains an intermediate result from its child operator $c$ and keeps only those tuples that were never part of this- or any previous intermediate result obtained from $c$, and
    \item $\square(Q)$ encapsulates a query $Q$ inside an \textit{abstraction} operator.
\end{itemize}
The latter four operator types require clarification. The $\alpha$ and $\beta$ operators interact with a buffer, which can be regarded as a temporary materialized view that exists only for the duration of query evaluation. The purpose of such buffers is to go beyond tree-structured query plans and allow operators with more than one consumer. Furthermore, the use of buffers allows a \textit{cyclic flow of tuples} through a query plan, thus enabling plans that can evaluate recursive queries. Note that it is the flow of tuples which may follow a cycle. A query plan is ultimately a high-level representation of a computer program and so cannot, and will not, have \textit{direct} cyclic dependencies between its operators. Such dependencies are avoided by the use of buffers as intermediaries. The $\alpha$ operator captures \textit{writing} to a buffer. As a consequence, there should be exactly one $\alpha$ operator associated with each buffer used by any query plan. The $\beta$ operator captures \textit{reading from} a buffer, and so there should be one or more $\beta$ operators associated with each buffer.
\added[id=R1,comment={D1}]{The use of buffers in query plans does \textit{not} imply that intermediate results with multiple- or cyclic consumers will be (fully) materialized during query evaluation. While materialization offers one way of executing query plans that feature operators with multiple consumers, there are alternative execution strategies that mitigate or remove the need for materalization~\cite{DBLP:phd/de/Neumann2005a} (e.g., \textit{push-based} strategies~\cite{quickstep}) which may be extended to cover plans the feature cyclic consumsers.}

The $\delta$ operator performs two functions related to deduplication of intermediate results. Firstly, it removes duplicates that occur \textit{in the same} intermediate result. Secondly, it removes any tuples from an intermediate result that were part of a previous intermediate result obtained from the $\delta$ operator's child. This is necessary in plans that evaluate closures, in order to reach a fix-point in the presence of cycles in the data graph. If the $\delta$ operator does \textit{not} lie on a cyclic path in the plan graph, the second function is void.

The \textit{abstraction} operator $\square$ allows for a query to be \textit{embedded into a query plan}. Embedding queries into query plans can be thought of as a specific form of embedding instances of recursive sub-problems into partial solutions. This embedding enables the construction of an extremely flexible \textit{top-down} enumerator that relies on memoization, rather than a \textit{bottom-up} enumerator that relies on a more rigid dynamic-programming table.

Because a bottom-up enumerator processes problems in order of increasing size, it is guaranteed to consider, for all problems it processes, all combinations of solutions to sub-problems. Our top-down enumerator must have the same property, which we provide by enforcing a depth-first order on the processing of abstractions. The depth-first order ensures that the size of sub-problems being processed is always decreasing, until a further decrease becomes impossible and the procedure back-tracks. Because multiple (partial) plans can exist simultaneously, and because each plan can contain multiple abstractions, we use a \textit{global} stack of query plans and a stack of abstractions \textit{per plan}. The global stack will be denoted $\mathcal{S}$ and $\mathcal{P.A}$ will denote the stack associated with plan $\mathcal{P}$.

\subsubsection{Enumeration algorithm}

Let $R$ be a finite set of enumeration rules and $\mathcal{S}$ a \textit{stack} of query plans.
The enumeration algorithm then boils down to a \texttt{while}-loop that tries to empty $\mathcal{S}$ by repeatedly applying memoization and the enumeration rules. Pseudo-code for this algorithm is shown in \autoref{alg:enum}.

The set of rules $R$ is fixed at query time and so is not an input parameter to \autoref{alg:enum}. In practice, rules are enabled or disabled by specialized commands executed prior to issuing a query. For the purpose of this paper, the set $R$ contains a rule related to reading triples from the sets $E$ and $P$ comprising a property graph, applying filter-predicates, constructing a fix-point procedure for \textit{unseeded} transitive closures, performing joins, applying seeding to one or more transitive closures and performing logical union of results. We elaborate on the join- and seeding rules in \autoref{sec:join-rule} and \autoref{sec:seed-rule}, respectively. We omit a detailed discussion of the other rules because they are straight-forward and do not contribute directly to the application of seeding.
\begin{algorithm}[ht]
    \caption{Enumerate($Q_{in}$)}
    \label{alg:enum}
    \resizebox{\linewidth}{!}{
    \begin{minipage}{1\linewidth}
    \begin{algorithmic}[1]
        \State $\mathcal{P}_{in} \leftarrow (\{o\}, o = \square(Q_{in}))$
        \State Add $\mathcal{P}_{in}$ to $\mathcal{S}$
        \While{$\mathcal{S}$ is not empty}
            \State Let $\mathcal{P}$ be the top query plan on $\mathcal{S}$
            \If{$\mathcal{P}$ contains no abstractions}
                \State Memoize $\mathcal{P}$, remove it from $\mathcal{S}$ and  \textbf{continue}
            \EndIf
            \State Let $\square(Q)$ be the top abstraction of $\mathcal{P.A}$
            \State Let $M$ be the set of memoized plans for $\square(Q)$
            \If{$M \neq \emptyset$}
                \State Let $\mathcal{P}_M$ be the memoized plan for $\square(Q)$
                \State Let $\mathcal{P}_R$ be a copy of $\mathcal{P}$, replacing $\square(Q)$ by $\mathcal{P}_M$
                \State Add $\mathcal{P}_R$ to $\mathcal{S}$
            \Else
                \For{rule $r$ in rule set $R$}
                    \For{each $\mathcal{P}_R$ generated by applying $r$ to $\square(Q)$}
                        \State Add $\mathcal{P}_R$ to $\mathcal{S}$
                    \EndFor
                \EndFor
            \EndIf
        \EndWhile
        \State \textbf{return} the memoized solution for $\square(Q_{in})$
    \end{algorithmic}
    \end{minipage}
    }
\end{algorithm}

On line 1 of \autoref{alg:enum} an initial plan $\mathcal{P}_{in}$ is constructed that consists of a single operator namely, an abstraction over the input query $Q_{in}$. This plan is subsequently pushed onto the stack $\mathcal{S}$. Repeatedly taking the top plan from $\mathcal{S}$ until it becomes empty results in a depth-first traversal of a tree of query plans as displayed in \autoref{fig:enum_proc}. Lines 5 and 6 identify cases where $\mathcal{P}$ is a finished plan, it is added to the memoization table and the algorithm continues to the next iteration over $\mathcal{S}$. These cases are the leaf nodes highlighted using dashed blue lines \autoref{fig:enum_proc}. When line 7 is reached, the plan $\mathcal{P}$ contains at least one abstraction $\square(Q)$ which must be processed either by replacing it with a plan from the memoization table in lines 10 through 12, or by applying one or more of the enumeration rules in lines 14 through 16. Replacement by a memoized plan is applied in step 10 in \autoref{fig:enum_proc} for example, where the plan for $R$ memoized in step 5 is used. Finally, at line 17 of \autoref{alg:enum}, it must be the case that a plan has been memoized for $Q_{in}$ by virtue of $\mathcal{S}$ being empty. Probing the memozation table for $Q_{in}$ yields a plan that is optimal with respect to the cost model used.

\begin{figure}[ht]
    \centering
    \includegraphics[width=0.95\columnwidth]{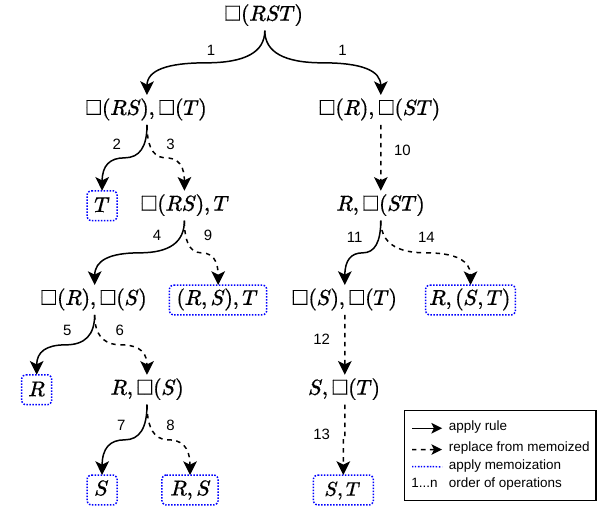}
    \caption{Example enumeration process (\autoref{sec:join-rule})}
    \label{fig:enum_proc}
\end{figure}

\subsection{Join rule}
\label{sec:join-rule}
The join rule can be applied to any query of the form \mbox{$Q(\overline{x}) \leftarrow L_1(\overline{y_1}),..., L_n(\overline{y_n})$} such that the \textit{join-graph} for $Q$ is connected. We write $\overline{x}$ to denote a sequence of variables $x_1,..., x_k$ as is conventional in the context of Datalog, and sometimes abuse notation by interpreting $\overline{x}$ as a set. For a detailed discussion on join-graphs corresponding to conjunctive queries we refer to~\cite{DBLP:conf/icde/FenderM11}. Its purpose is to take the set of predicates in the body $\mathcal{B} = \{L_1, ..., L_n \}$ and construct all possible pairs $(T, U)$ where $T$ and $U$ are strict, non-empty and disjoint subsets of $\mathcal{B}$ such that $T \cup U = \mathcal{B}$ holds, the join-graphs of $T$ and $U$ are connected and there exists at least one join predicate between $T$ and $U$. Note that for any such pair $(T, U)$ there exists a symmetric pair $(U, T)$. We want to consider exactly one pair per symmetric case. The problem of enumerating all such pairs efficiently while avoiding symmetric cases, has been studied extensively in the context of relational databases \cite{ono90,DBLP:conf/icde/FenderM11,DBLP:journals/tkde/FenderM12/reassessing,DBLP:journals/pvldb/FenderM13/counterstrike,DBLP:conf/sigmod/DeHaanT07,DBLP:conf/sigmod/MoerkotteDP}. Hence, we employ a state-of-the-art algorithm developed in the relational context for precisely this purpose, namely \texttt{MinCutBranch} (MCB)\cite{DBLP:conf/icde/FenderM11}.

\begin{qequation}
    \begin{aligned}[t]
        V(s, t) \leftarrow & \, E(s, e, t), P(e, \text{label}, l_1) \label{q:rules_ex} \\
        W(s, t) \leftarrow & \, E(s, e, t), P(e, \text{label}, l_2) \\
        Y(s, t) \leftarrow & \, E(s, e, t), P(e, \text{label}, l_3) \\
        Z(s, t) \leftarrow & \, E(s, e, t), P(e, \text{label}, l_4) \\
        {\scriptstyle Ans(x, y, z)} \leftarrow & \, V^+(s, x), W^+(x, y), Y^+(y, z), Z(x, z)
    \end{aligned}
\end{qequation}

The join rule would construct one plan for each of the following pairs of subsets of $\{V^+, W^+, Y^+, Z\}$ from \autoref{q:rules_ex} whose query-graph is displayed in \autoref{fig:query_4}:

\begin{equation*}
    \begin{aligned}
        (\{V^+\}, \{W^+, Y^+, Z\}) && (\{W^+\}, \{V^+, Y^+, Z\}) \\
        (\{Y^+\}, \{V^+, W^+, Z\}) && (\{Z\}, \{V^+, W^+, Y^+\}) \\
        (\{V^+, W^+\}, \{Y^+, Z\}) && (\{V^+, Z\}, \{W^+, Y^+\})
    \end{aligned}
\end{equation*}

By repeated application of the join rule, all join orders avoiding Cartesian products are constructed. In practice, we typically construct more than one plan per pair of subsets. First, the pair is \quotes{switched} to its symmetric counter-part thus capturing the commutativity of the join operation. Furthermore, various types of annotations may be added to distinguish hash-, merge- and nested-loop joins, etc. Provided that the set of equivalences (e.g., commutativity) and join algorithms (e.g., hash join) is fixed at runtime, the number of plans constructed per pair of subsets is constant. In order to simplify any analysis of the join rule, we will assume that only one plan is constructed per pair of subsets.

As a consequence, the \textit{search space} considered (i.e., the set of join orders constructed) grows exponentially in terms of the size of the input query in the worst case, where the query's join-graph is a \textit{clique}~\cite{DBLP:conf/icde/FenderM11}. The use of \textit{heuristics} has been proposed to reduce this search space and thus speed up query optimization. One such heuristic is to avoid the construction of \quotes{bushy} plans in favor of \quotes{zig-zag} plans~\cite{DBLP:journals/vldb/LeisRGMBKN18}. A bushy plan features at least one join whose inputs are themselves the (indirect) result of one or more joins. Conversely, a zig-zag plan guarantees that for every join in the plan, at least one of the inputs is not the (indirect) result of any join. In \autoref{sec:complexity} we will see that the application of such heuristics combines well with the application of seeding, as a consequence of \autoref{theo:enum}. Because the number of plans generated when seeding is applied is within a \textit{constant} factor of the number of plans generated when seeding is \textit{not} applied, any reduction in terms of the search space's size by the application of heuristics anywhere in the optimization process is, up to a constant factor.

The enumeration process for a query that expresses a join between relations $R$ and $S$ and $S$ and $T$ (but not directly between $R$ and $T$) is displayed in \autoref{fig:enum_proc}. This process can be thought of as walking a tree where the root is an abstraction over the input query (in this case represented as $\square(RST)$) and the leaves are fully fledged or \textit{concrete} logical plans. The inner vertices of this tree are \textit{partial} plans. Step 1 is to take $\square(RST)$ and apply the set of enumeration rules. Application of the join rule creates two plans which each contain a pair of abstractions. Step 2 is to take the top plan and its top abstraction (in this case $\square(T)$) and apply the enumeration rules again. This yields one concrete plan, which is subsequently \textit{memoized}. Step 3 replaces $\square(T)$ in the plan produced in step 1 with the plan for $T$ memoized in step 2. The process of rule application, memoization and replacement repeats until no partial plans remain.

\subsection{Seeding rule}
\label{sec:seed-rule}
The seeding rule takes a conjunctive query that contains at least one closure, tries to construct a seeding query and to obtain seeds for all transitive closures in the query. A more precise description follows in \autoref{sec:seeding-rule-overview} and \autoref{sec:seeding-rule-partitioning}. Construction of the seeding query and the seeded closures are discussed in \autoref{sec:seeding-rule-seeding} through \autoref{sec:seeding-rule-exterior}.

\subsubsection{Seeding rule overview}
\label{sec:seeding-rule-overview}
The seeding rule can be applied to any query of the form \mbox{$Q(\overline{x}) \leftarrow L_1(\overline{y}_1),..., L_n(\overline{y}_n)$} such that the following statements hold:

\begin{itemize}
    \item the join graph for $Q$ is connected,
    \item the body of $Q$ contains at least one transitive closure, and
    \item the body of $Q$ does not contain an interior closure for which freeing either variable produces a disconnected seed.
\end{itemize}

We define the set of predicates in the body of $Q$ as $\mathcal{B} = \{ L_1, ..., L_n \}$ and the union of the output projections of those predicates as $Y = \bigcup_{1 \leq i \leq n}\overline{y}_i$. Note that we abuse notion slightly here, turning sequences of variables $\overline{y}_i$ into a set of variables $Y$. The purpose of the seeding rule is to take $Q$ and \quotes{split} it into a seeding query and $k$ fix-point procedures with $1 \leq k \leq n$ and construct the corresponding query plan. The process can be subdivided into four steps, namely:

\begin{enumerate}
    \item partitioning $\mathcal{B}$ into sets $N, I$ and $X$,
    \item constructing a seeding query based on $N$, $I$ and $X$,
    \item processing each closure in $I$ and,
    \item processing each closure in $X$.
\end{enumerate}

The sets $N, I$ and $X$ are the subsets of $\mathcal{B}$ that contain only the non-recursive predicates, predicates that are interior closures and predicates that are exterior closures, respectively.

\subsubsection{Seeding rule heuristics}
Note that \textit{exactly one} plan is constructed by the seeding rule for any valid input query, even though there are two degrees of freedom related to the set of interior closures $I$ that allow multiple plans to be constructed. Firstly, for any interior closure in $I$ there is the choice of which of the two variables to free. Secondly, there are many ways of stacking the closures in $I$ when $|I| > 2$. In order to construct only a single plan, two \textit{heuristics} will be used to make these two choices.

\added[id=R1,comment={D2}]{We will refer to these two heuristics as $h_1$ and $h_2$. While both heuristics introduce a potential for missing a lower cost plan, they eliminate two sources of exponential increase in the plan space size which render exhaustively enumerating this space infeasible, namely:}
\begin{itemize}
    \item $h1$ avoids enumerating $2^m$ options of which variable to free for $0 < m$ interior closures, and
    \item $h2$ avoids enumerating $\frac{1}{2}m!$ orders of stacking $2 < m$ interior closures.
\end{itemize}

Let $L_i^+$ denote a transitive closure predicate from the body $\mathcal{B}$ of $Q$. We write $h_1(L_i^+)$ to denote the application of $h_1$ to $L_i^+$. The heuristic $h_1$ therefore takes in interior closure $L_i^+$, decides which variable to free, and produces a version of the base $L_i$ in which the freed variable is renamed to a fresh variable (i.e., the join by way of a common variable between it and any other predicate in $\mathcal{B}$ is removed). In our implementation, $h_1$ picks $x$ for any $L_i^+(x, y)$ if it produces a connected seeding relation, and only picks $y$ if it does not.

The heuristic $h_2$ is used to decide the order in which interior closures are stacked (and therefore, the order in which they are joined with their seeding relations). In our implementation, we consider closures in $I$ in order of increasing cardinality estimates. The rationale for this is twofold. First, joining the results of closures in order of increasing cardinality is expected to minimize the output cardinalities of those joins. Second, we in turn expect to minimize the cardinalities of the seeds produced by selectivity stacking.

\added[id=R1,comment={D2}]{Both heuristics are simple, and are to be considered as a baseline effort to render the application of these advanced seeding-based optimization techniques feasible without introducing new sources of exponentional complexity to the optimization process. Considering more advanced and sophisticated heuristics to replace $h_1$ and $h_2$ remains an interesting future challenge.}

\subsubsection{Partitioning closures}
\label{sec:seeding-rule-partitioning}
We partition $\mathcal{B}$ into a set of non-recursive predicates $N$, a set of interior closures $I$ and a set of exterior closures $X$, by checking for each $L_i(\overline{y}_i) \in \mathcal{B}$ whether it is recursive at all, and if so, whether all variables in $\overline{y}_i$ participate in at least one join-predicate. For \autoref{q:rules_ex}, the resulting partitions would be:
\begin{equation*}
  \begin{aligned}[b]
    N = &\{Z(x, z)\}, I = \{W^+(x, y), Y^+(y, z)\} \text{ and} \\
    X = &\{V^+(s, x)\}
  \end{aligned}
  \label{eq:ex_partitioning}
\end{equation*}
This is because $Z(x, z)$ is not recursive, $V^+(s, x)$ is recursive but $s$ does not participate in any join-predicate in \autoref{q:rules_ex}, and $W^+(x, y)$ and $Y^+(y, z)$ are both recursive and $x, y$ and $z$ all participate in at least one join-predicate in \autoref{q:rules_ex}.

\subsubsection{Seeding query construction}
\label{sec:seeding-rule-seeding}
Let $Q_s(\overline{x}_s) \leftarrow \mathcal{B}_s$ be the seeding query. The body $\mathcal{B}_s$ is defined as the conjunction of the predicates in the set:
\begin{equation*}
  \begin{aligned}[b]
    N \cup \{ h_1(L_i^+) \,|\, L_i^+ \in I \} \cup \{ L_i \,|\, L_i^+ \in X \}
  \end{aligned}
  \label{eq:seed_body}
\end{equation*}
For any $L_i \in \mathcal{B}_s$ such that $L_i^+ \in I \cup X$ we may need to extend the set of output variables $\overline{x}_s$. That is, by default $\overline{x}_s = \overline{x}$ but $\overline{x}$ may not contain the free(d) variable from $L_i$, which must be added. For \autoref{q:rules_ex} the resulting seeding query, assuming $h_1$ decides to free $y$ for both $W^+(x, y)$ and $Y^+(y, z)$ and rename it to $y_1$ and $y_2$ respectively, would be:

\begin{equation*}
  \begin{aligned}[b]
    Q_s(s, x, y_1, y_2, z) \leftarrow V(s, x), W(x, y_1), Y(y_2, z), Z(x, z)
  \end{aligned}
  \label{eq:ex_seed}
\end{equation*}

Notice that the variable $s$ has been added to $\overline{x}_s$. This is necessary because the set of nodes that will bind to $s$ during query evaluation, is the set that will be used as the seed for $\scrp{V}{S}(s, x)$.

\subsubsection{Process interior closures}
\label{sec:seeding-rule-interior}
Having defined a query $Q_s$ that captures a seeding query for our query plan, we wish to process each $L_i^+ \in I$ and extend our query plan by adding a fix-point procedure to it (by way of constructing a new sub-graph). Step 1 in \autoref{fig:seed_plan} shows the query plan containing an abstraction over the seeding query $Q_s$ and an $\alpha$-operator writing the evaluation of the seeding query into a buffer for consumption by multiple consumers (i.e., multiple closures from $I \cup X$).
Here we employ heuristic $h_2$ to decide the order in which interior closures are stacked.
After every join that joins the $i^{th}$ transitive closure with the preceding $j$ closures where $1 \leq j < i$, we instantiate a new buffer and $\alpha$-operator that facilitates the selectivity stacking.

Step 2 in \autoref{fig:seed_plan} shows the query plan after processing $I$. The intuition behind this plan is as follows. Query $Q_s$ will be evaluated in some way (as represented by the abstraction), its result will be buffered in $b_1$ and used as a seeding relation to two fix-point produces rooted at the $\alpha(b_2)$ and $\alpha(b_3)$ operators. Both fix-point procedures obtain the result through a $\beta(b_1)$ operator, project the seeding relation to a single (different) variable, thus obtaining their respective seeds. Evaluation of the seeded closures is achieved through the cyclic flow of tuples over the buffers $b_2$ and $b_3$.

\subsubsection{Process exterior closures}
\label{sec:seeding-rule-exterior}
Having processed all interior closures, we wish to process each $L_i^+ \in X$ and extend our query plan by adding a fix-point procedure. \autoref{fig:seed_plan} shows the query plan for $Q$ after processing $X$. Buffer $b_4$ facilitates selectivity stacking. That is, the exterior closure $\scrp{V}{S}$ is computed based on the content of buffer $b_4$ which is the evaluation of the query:

\begin{equation*}
  \begin{aligned}[b]
    Q_{s'}(s, x, y, z) \leftarrow V(s, x), W^+(x, y), Y^+(y, z), Z(x, z)
  \end{aligned}
  \label{eq:ex_seed_2}
\end{equation*}
The query $Q_{s'}$ is a seeding query that is at least as selective as (but in practice often much more selective than) the seeding query $Q_s$ whose evaluation defines the content of buffer $b_1$.

\begin{figure}[t!]
    \includegraphics[width=\columnwidth]{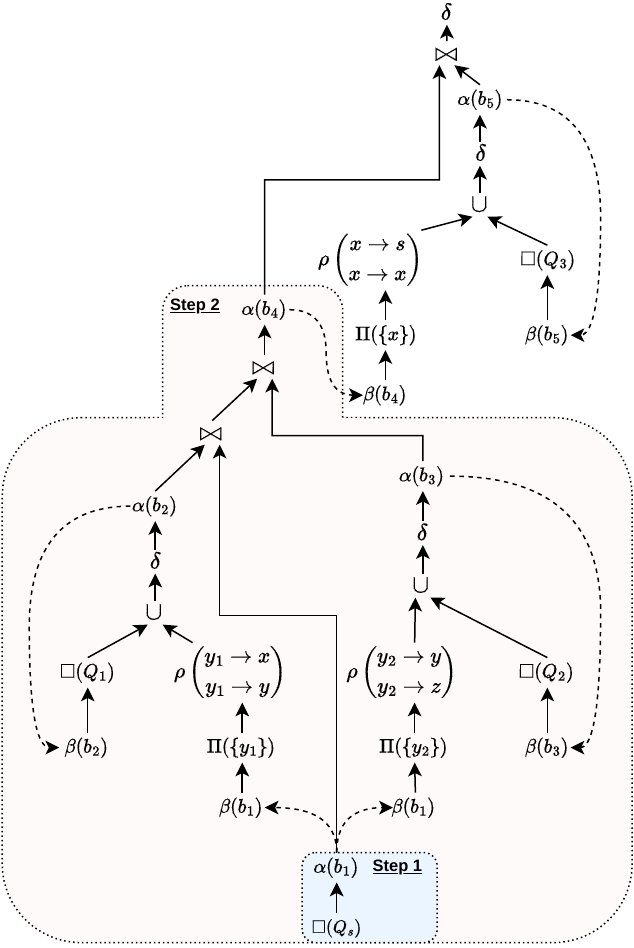}
    \caption{Constructing a seeded plan (\autoref{sec:seeding-rule-interior})}
    \label{fig:seed_plan}
\end{figure}

Notice the inclusion of abstractions over the queries $Q_1, Q_2$ and $Q_3$ in \autoref{fig:seed_plan}. These abstractions capture the fact that the recursive sub-problems of how to expand $\scra{W}{T}, \scrp{Y}{T}$ and $\scrp{V}{S}$ on top of their respective seeds still need to be solved by the enumerator. These queries correspond to the recursive Datalog rules $\scra{W}{T}(x, y) \leftarrow \scra{W}{T}(x, w), W(w, y)$ for $Q_1$, $\scrp{Y}{T}(y, z) \leftarrow Y(y, w), \scrp{Y}{T}(w, z)$ for $Q_2$ and $\scrp{V}{S}(s, x) \leftarrow V(s, w), \scrp{V}{S}(w, x)$ for $Q_3$.

\subsection{Complexity analysis}
\label{sec:complexity}
We analyze the impact that the application of the proposed optimizations has on the complexity of the enumeration procedure. To this end, we will count the \textit{total} number of plans generated during optimization for a query of a particular \textit{size} and \textit{shape}. The following definitions allow us to analyze the complexity of the enumeration procedure.
\begin{definition}
    Let $Q$ be a query. The \textit{optimization tree} $\mathcal{T}_Q$ is the tree of plans generated \textit{implicitly} by applying \autoref{alg:enum} to $Q$.
\end{definition}
A node is added to $\mathcal{T}_Q$ when a plan $\mathcal{P}$ is added to the stack $\mathcal{S}$ in either line 2, 12 or 16 of \autoref{alg:enum}. An edge is added from any plan $\mathcal{P}$ taken from $\mathcal{S}$ in line 4 to any plan $\mathcal{P}_R$ added to $\mathcal{S}$ in either line 12 or 16. The tree is generated implicitly, without being materialized.
\begin{definition}
    Given an optimization tree $\mathcal{T}_Q$, the set of \textit{leaves} of $\mathcal{T}_Q$ is denoted $L(\mathcal{T}_Q)$.
\end{definition}
The set of leaves of an optimization tree $\mathcal{T}_Q$ corresponds to the set of \textit{concrete} plans generated by \autoref{alg:enum} and therefore also to the number of times a cost-model is called when memoizing a plan.
\begin{definition}
    Given a query $Q$, the optimization tree $\mathcal{T}_{Q,u}$ is the tree generated by \autoref{alg:enum} when \textit{not} using the seeding rule.
\end{definition}
\begin{definition}
    Given a query $Q$, the optimization tree $\mathcal{T}_{Q,o}$ is the tree generated by \autoref{alg:enum} when using the seeding rule.
\end{definition}
We derive functions $P_u(n)$ and $P_o(n)$ which count $L(\mathcal{T}_{Q,u})$ and $L(\mathcal{T}_{Q,o})$ respectively, for a query $Q$ of a fixed shape and variable size $n$. Then, we argue that for any input query $Q$ there exists a constant $c \in \mathbb{N}$ such that $P_o(n) \leq c \cdot P_u(n)$ holds. In other words, we will argue that the number of plans generated when using the seeding rule (i.e., when applying the proposed optimizations) is within a \textit{constant factor} of the number of plans generated without using the seeding rule. From this, it follows that the complexity of the optimization procedure when applying the proposed optimizations is within a constant factor of the complexity of the procedure without the proposed optimizations. This is because the sizes of any $\mathcal{T}_{Q,u}$ and $\mathcal{T}_{Q,o}$ are proportional to the sizes of their respective sets of leaves (i.e., numbers of concrete plans generated) multiplied by the size of $Q$ because the length of any path from the root of such a tree to a leaf is directly proportional to the size of $Q$.

The question remains \textit{how} to fix the shape to guarantee that if $P_o(n) \leq c \cdot P_u(n)$ holds for any query of this shape, that it indeed holds for any query in general. To do this, it must \textit{maximize} $c$ amongst all possible queries. In other words, the shape must be such that it increases $P_o(n)$ as much as possible \textit{relative to} $P_u(n)$.

The worst-case query shape, from the perspective of maximizing $c$, is a star-shaped query consisting \textit{only} of recursive terms. This query shape is shown in \autoref{fig:query_star_recursive}.
\begin{figure}[ht]
    \centering
     \begin{subfigure}{0.4\columnwidth}
         \includegraphics[width=\linewidth]{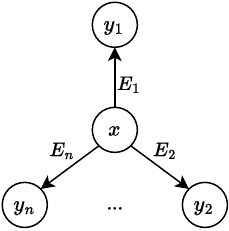}
         \caption{non-recursive}
          \label{fig:query_star}
     \end{subfigure}
     \hfill
     \begin{subfigure}{0.4\columnwidth}
         \includegraphics[width=\linewidth]{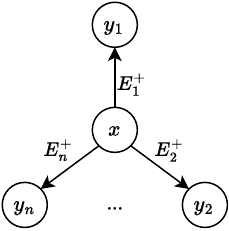}
         \caption{recursive}
         \label{fig:query_star_recursive}
     \end{subfigure}
    \caption{Star-shaped queries of $n$ terms (\autoref{sec:complexity})}
    \label{fig:query_stars}
\end{figure}
This shape maximizes $c$ because it maximizes the number of opportunities for the seeding rule to be applied. The reason for this is twofold. Firstly, because all terms in this shape are recursive, the seeding rule applies to any set of terms of size at least two. Secondly, the star-shape maximizes the rate at which the number of possible plans grows, and this rate of growth is proportional to the number of sub-queries of the input query \cite{ono90}, which is in turn proportional to the number of times the seeding rule can be applied.

We derive $P_u(n)$ and $P_o(n)$ based on the query shapes presented in \autoref{fig:query_star} and \autoref{fig:query_star_recursive}, respectively. By deriving $P_u(n)$ based on the non-recursive query in \autoref{fig:query_star}, we ensure by construction that the seeding rule does not apply.

\subsubsection{Analysis of non-recursive star-shaped queries}
The query shape from \autoref{fig:query_star} can be analyzed using two templates $T_1$ and $T_2$. Template $T_1$ represents abstractions of $k$ terms arranged in a star, with $2 \leq k \leq n$. That is, with $k = n$ this template represents the input query. Template $T_2$ represents abstractions over a single term $E_i$. The templates are specified as follows:
\begin{equation}
    \begin{aligned}
        & T_1 :\, E_1...E_k \quad \text{for } 2 \leq k \leq n &&& T_2 :\, E_1
    \end{aligned}
\end{equation}
Using these templates we can derive $P_u(n)$. We observe that there are $n$ ways to instantiate $T_2$ because there are $n$ terms to pick from in the input query. Each instantiation yields a single plan because a single term cannot be broken down further. Similarly, there are $\binom{n}{k}$ ways to instantiate $T_1$ because we can pick any $k$ out of $n$ terms. There are $2^{k-1} - 1$ non-empty, strict subsets of a set of $k$ elements when not counting a subset and its complement separately. Hence each instantiation of $T_1$ yields $2^{k-1} - 1$ plans.

Let $T_i(n, k)$ denote the number of plans generated by template $T_i$ for sizes $n$ and $k$. The formula $P_u(n)$ can be defined in terms of $T_1(n, k)$ and $T_2(n, k)$ as follows:
\begin{align}
    P_u(n) & = T_2(n, 1) + \sum\limits_{k=2}^n T_1(n, k) \nonumber \\
    & = n + \sum\limits_{k=2}^n \tbinom{n}{k}(2^{k-1} - 1) \nonumber \\
    & = \frac{1}{2}(3^n - 2^{n+1} +2n + 1)
\end{align}

\subsubsection{Analysis of recursive queries}
The query shape from \autoref{fig:query_star_recursive} can be analyzed using the previously defined templates $T_1$ and $T_2$ alongside two new templates. Templates $T_3$ and $T_4$ are similar to $T_1$ and $T_2$ respectively, but are used to represent recursive terms $E_i^+$ or sequences thereof. The templates are specified as follows:
\begin{equation}
    \begin{aligned}
        T_3 :&\, E_1^+...E_k^+ \quad \text{for } 2 \leq k \leq n &&
        T_4 :&\, E_1^+
    \end{aligned}
\end{equation}
There are $n$ ways to instantiate $T_4$ and each instantiation generates one plan which captures a fix-point procedure for $E_1^+$. Template $T_3$ can be instantiated in $\binom{n}{k}$ ways, just as $T_1$ was. The number of plans generated for each instantiation of $T_3$ is $2^{k-1}$, with $2^{k-1} - 1$ generated by the join rule and one plan by the seeding rule. The formula $P_o(n)$ can be defined as follows:
\begin{align}
    P_o(n) & = T_2(n, 1) + T_4(n,1) + \sum\limits_{k=2}^n T_1(n, k) + T_3(n, k) \nonumber \\
    & = 2n + \sum\limits_{k=2}^n \tbinom{n}{k}(2^{k-1} - 1) + \tbinom{n}{k}2^{k-1} \nonumber \\
    & = 2n + \sum\limits_{k=2}^n \tbinom{n}{k}(2^k - 1) \nonumber \\
    & = 3^n + 2^{n-1}(n-2) + 3n
\end{align}
Having defined $P_u(n)$ and $P_o(n)$ we can relate both formulas and conclude that the proposed optimizations can be applied with only a constant factor increase to the complexity of the optimization procedure.
\begin{theorem}
    For all $n \geq 2$ it holds that:
    \begin{align}
        P_o(n) &\leq 6 \cdot P_u(n) \quad \Leftrightarrow \nonumber  \\
        3^n + 2^{n-1}(n-2) + 3n & \leq 6 \cdot \Bigl(\frac{1}{2}(3^n - 2^{n+1} +2n + 1)\Bigr) \nonumber
    \end{align}
    \label{theo:enum}
\end{theorem}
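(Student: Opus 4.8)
The plan is to prove the (equivalent) second displayed inequality; the first follows from it by writing $6\cdot\frac{1}{2}(3^n - 2^{n+1} + 2n + 1) = 3^{n+1} - 3\cdot 2^{n+1} + 6n + 3$, so establishing either one suffices. First I would simplify. Using $2^{n-1}(n-2) = 2^{n-1}n - 2^n$, $3^{n+1} = 3\cdot 3^n$ and $3\cdot 2^{n+1} = 12\cdot 2^{n-1}$ and collecting terms, the claim is equivalent to
\begin{align}
2^{n-1}(n+10) \;\le\; 2\cdot 3^n + 3n + 3, \qquad n \ge 2, \nonumber
\end{align}
and this is the inequality I would actually prove.

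I would prove it by induction on $n$. The base case $n = 2$ is immediate: the left-hand side is $2\cdot 12 = 24$ and the right-hand side is $18 + 6 + 3 = 27$. For the inductive step, assuming $2^{n-1}(n+10) \le 2\cdot 3^n + 3n + 3$ for some $n \ge 2$, the goal is $2^{n}(n+11) \le 2\cdot 3^{n+1} + 3(n+1) + 3$. Writing the left-hand side as $2\cdot 2^{n-1}(n+10) + 2^n$ and applying the induction hypothesis gives
\begin{align}
2^{n}(n+11) \;\le\; 2\bigl(2\cdot 3^n + 3n + 3\bigr) + 2^n \;=\; 4\cdot 3^n + 2^n + 6n + 6, \nonumber
\end{align}
so it remains to check $4\cdot 3^n + 2^n + 6n + 6 \le 2\cdot 3^{n+1} + 3n + 6 = 6\cdot 3^n + 3n + 6$, i.e.\ that $2^n + 3n \le 2\cdot 3^n$. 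This holds for every $n \ge 1$ because $2^n < 3^n$ and $3n \le 3^n$ (the latter since $n \le 3^{n-1}$ for $n \ge 1$), which completes the induction.

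I do not expect a real obstacle; the estimate is routine once it has been brought into the shape above. The single point that needs a little care is that one cannot simply discard the lower-order terms $3n + 3$: the further-simplified inequality $2^{n-1}(n+10) \le 2\cdot 3^n$ is \emph{false} at $n = 2$ (it would read $24 \le 18$) and only becomes valid from $n = 3$ on, so the base case genuinely relies on keeping those terms. An alternative, purely analytic route would be to put $f(n) = 2\cdot 3^n + 3n + 3 - 2^{n-1}(n+10)$, note $f(2) = 3 > 0$, and verify $f(n+1) - f(n) = 4\cdot 3^n + 3 - 2^{n-1}(n+12) \ge 0$ for all $n \ge 2$, which again reduces to an exponential-beats-linear comparison; either path is short.
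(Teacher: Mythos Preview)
Your proof is correct and follows essentially the same approach as the paper, which simply states ``By a straight-forward induction on $n$'' without further detail. Your write-up supplies exactly the routine computation the paper leaves implicit, including the useful observation that the lower-order terms $3n+3$ cannot be dropped at the base case $n=2$.
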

\begin{proof}
    By a straight-forward induction on $n$.
\end{proof}
\begin{corollary}
    The optimization procedure's complexity when applying the proposed optimizations is within a constant factor of its complexity when not applying these optimizations.
\end{corollary}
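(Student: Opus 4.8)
The plan is to collapse the displayed two-line equivalence into a single scalar inequality in $n$ and then dispatch that by induction; this is exactly what makes the ``straightforward induction'' remark go through without fuss. First I would expand the right-hand side,
\[
6\cdot\Bigl(\tfrac{1}{2}(3^n - 2^{n+1} + 2n + 1)\Bigr) = 3^{n+1} - 3\cdot 2^{n+1} + 6n + 3 ,
\]
subtract the left-hand side $3^n + 2^{n-1}(n-2) + 3n$, and merge the two powers of two using $3\cdot 2^{n+1} = 12\cdot 2^{n-1}$. This shows that the asserted inequality $P_o(n)\le 6\,P_u(n)$ is equivalent to
\[
D(n)\;:=\;2\cdot 3^n + 3n + 3 - (n+10)\,2^{n-1}\;\ge\;0 ,
\]
so it suffices to establish $D(n)\ge 0$ for all $n\ge 2$.

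For this I would induct on $n$. The base case is $D(2) = 18 + 6 + 3 - 24 = 3 \ge 0$. For the inductive step I would compute the forward difference $D(n+1)-D(n) = 4\cdot 3^n + 3 - (n+12)\,2^{n-1}$ and show it is nonnegative for $n\ge 2$; this makes $D$ nondecreasing from $n=2$ onward, hence $D(n)\ge D(2) = 3 > 0$. Undoing the equivalence of the first paragraph then yields $P_o(n)\le 6\,P_u(n)$, which is the theorem.

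The auxiliary nonnegativity $4\cdot 3^n + 3 \ge (n+12)\,2^{n-1}$ is itself an exponential-versus-polynomial statement, which I would handle one level down in the same style: it holds at $n=2$ (the two sides are $39$ and $28$), and its own forward difference $8\cdot 3^n - (n+14)\,2^{n-1}$ is positive for $n\ge 2$ because $3^n/2^{n-1} = 2(3/2)^n$ grows geometrically while $n+14$ grows linearly — a numerical check at $n=2$ together with the fact that successive terms of the geometric part multiply by $3/2 > 1$ closes it. The corollary is then immediate from the reduction already argued before the theorem statement: each optimization tree has size proportional to its number of leaves times $|Q|$, and the leaf counts are $L(\mathcal{T}_{Q,o}) = P_o(n)$ and $L(\mathcal{T}_{Q,u}) = P_u(n)$, so $P_o(n)\le 6\,P_u(n)$ bounds $|\mathcal{T}_{Q,o}|$ within a constant factor of $|\mathcal{T}_{Q,u}|$.

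There is no deep obstacle here. The only point I would be careful about is that the inductive step of the main induction is a second exponential-versus-polynomial inequality, so a fully self-contained write-up needs either a nested induction or an explicit ``$3^n$ eventually dominates $n\,2^n$'' lemma supported by a small base check; this is also precisely why the constant is taken to be $6$ (rather than something tighter) and the range is $n\ge 2$, and why collapsing everything to the single quantity $D(n)$ before inducting keeps the bookkeeping trivial.
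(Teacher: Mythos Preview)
Your proposal is correct and follows essentially the same route as the paper: the paper proves the theorem $P_o(n)\le 6\,P_u(n)$ ``by a straight-forward induction on $n$'' and then states the corollary as an immediate consequence of the earlier observation that $|\mathcal{T}_{Q,\cdot}|$ is proportional to its leaf count times $|Q|$. You simply fill in the details the paper omits---reducing the displayed inequality to $D(n)\ge 0$, handling the base case, and controlling the forward differences via a nested exponential-versus-linear comparison---and then invoke the same tree-size reduction; there is no substantive difference in approach.
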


\subsection{Cost-based optimization}
\label{sec:estimation}
While the enumerator generates a set of query plans for a given input query, only a single plan from this set is ultimately executed in order to obtain the query's result. Choosing which one plan to execute is commonly done on the basis of some \textit{estimated cost}. The estimated cost of a plan represents the estimated amount of time and possibly other resources such as memory and disk I/O required to execute the plan. A plan with lower estimated cost is to be preferred over an alternative plan with higher estimated cost at any stage during the optimization procedure. Producing high quality cost estimates (and \textit{cardinality} estimates as a component thereof) is a very challenging problem \cite{DBLP:journals/tkde/LeeuwenFY23,10.14778/3503585.3503586}, not only in the graph- but also in the relational paradigm, and it is something that even mature database systems often struggle with \cite{DBLP:journals/vldb/LeisRGMBKN18}. As such, an in-depth discussion of cardinality- and cost estimation for Regular Queries on graphs and the type of query plans we have proposed is beyond the scope of this work.

Nevertheless, our approach does employ cost-based optimization and therefore must provide some form of cost estimation. The estimation methods are similar to existing approaches \cite{DBLP:conf/sigmod/YakovetsGG16,DBLP:conf/sigmod/JachietGGL20} insofar as that they rely on a \textit{catalog} of stored facts and statistics about the database instance, such as the number of edges in the graph, the number of edges with a certain label for each label in the graph and synopses of the sets of nodes that have edges with a certain label incoming on- or outgoing from them. 
The information maintained in the catalog can then be combined to produce a \textit{cardinality} estimate for any arbitrary query. Given a logical operator in a query plan, the \textit{cost} of executing the operator can subsequently be estimated based on the operator's type and the cardinality estimate(s) for its input(s). The way in which our system uses the catalog information is similar to the approach taken by \postgres{}~\footnote{https://www.postgresql.org/docs/current/planner-stats-details.html}.

\section{Experimental Study}
\label{sec:method}

We next investigate the empirical behavior of our new methods, demonstrating that they enable orders of magnitude improvements in query evaluation performance.
Three main conclusions can be drawn from this experimental study, which are summarized below.

\paragraph{Conclusion (1): the proposed optimizations are potent.}
By which we mean that the performance improvements offered by the proposed optimizations are significant. Note that this concerns the optimizations themselves, and not yet their application in a cost-based optimizer. We conclude that the performance of the \quotes{best plan in practice} that uses any of the proposed optimizations, is often significantly better than the performance of the best plan in practice amongst those plans that do \textit{not} use any of the proposed optimizations. Whether the optimizations can be applied effectively in a cost-based optimizer is investigated separately. We elaborate on the notion of the best plan in practice in \autoref{sec:method_setup}.

\paragraph{Conclusion (2): the proposed optimizations can be applied effectively.} In order to apply an optimization technique effectively its application must not have a detrimental effect on the performance of the optimization procedure itself. Moreover, the improvement to query evaluation performance achieved through cost-based optimization (i.e., choosing an estimated best plan, rather than necessarily obtaining the best plan in practice) must be significant.

\begin{table*}[ht]
\centering
\resizebox{\textwidth}{!}{
\begin{tabular}{ll||llll||llll||llll||llll||}
\multicolumn{2}{l||}{Template} & \multicolumn{4}{l||}{CCC (\#instances = 24)} & \multicolumn{4}{l||}{PCC2 (\#instances = 16)} & \multicolumn{4}{l||}{PCC3 (\#instances = 4)} & \multicolumn{4}{l||}{All (\#instances = 44)} \\
\multicolumn{2}{l||}{Metric} & PC & AC & PT & AT & PC & AC & PT & AT & PC & AC & PT & AT & PC & AC & PT & AT \\ \hline \hline
\multicolumn{2}{l||}{Min} & 0.536 & \cellcolor[HTML]{DAE8FC}\dotuline{0.419 } & 1 & \cellcolor[HTML]{DAE8FC}0.625 & 0.405 & \cellcolor[HTML]{DAE8FC}\dotuline{0.105 } & 1 & \cellcolor[HTML]{DAE8FC}0.227 & 1.91 & \cellcolor[HTML]{DAE8FC}\dotuline{0.552 } & 1.9 & \cellcolor[HTML]{DAE8FC}0.761 & 0.405 & \cellcolor[HTML]{DAE8FC}0.105 & 1 & \cellcolor[HTML]{DAE8FC}0.227 \\ \hline
 & 10 & 1 & \cellcolor[HTML]{DAE8FC}1 & 1.02 & \cellcolor[HTML]{DAE8FC}0.679 & 1 & \cellcolor[HTML]{DAE8FC}0.405 & 1 & \cellcolor[HTML]{DAE8FC}0.581 & 1.91 & \cellcolor[HTML]{DAE8FC}0.552 & 1.9 & \cellcolor[HTML]{DAE8FC}0.761 & 1 & \cellcolor[HTML]{DAE8FC}0.552 & 1 & \cellcolor[HTML]{DAE8FC}0.679 \\
 & 25 & 1.13 & \cellcolor[HTML]{DAE8FC}1.07 & 1.3 & \cellcolor[HTML]{DAE8FC}0.763 & 1 & \cellcolor[HTML]{DAE8FC}1 & 1 & \cellcolor[HTML]{DAE8FC}0.885 & 1.91 & \cellcolor[HTML]{DAE8FC}0.552 & 1.9 & \cellcolor[HTML]{DAE8FC}0.761 & 1 & \cellcolor[HTML]{DAE8FC}1 & 1.13 & \cellcolor[HTML]{DAE8FC}0.862 \\
 & 50 & 1.77 & \cellcolor[HTML]{DAE8FC}1.68 & \dashuline{1.98 } & \cellcolor[HTML]{DAE8FC}\dashuline{1.89 } & 1 & \cellcolor[HTML]{DAE8FC}1 & \dashuline{1.27 } & \cellcolor[HTML]{DAE8FC}\dashuline{1.11 } & 4.25 & \cellcolor[HTML]{DAE8FC}2.34 & \dashuline{4.87 } & \cellcolor[HTML]{DAE8FC}\dashuline{3.09 } & 1.77 & \cellcolor[HTML]{DAE8FC}1.59 & \dashuline{1.9 } & \cellcolor[HTML]{DAE8FC}\dashuline{1.88 } \\
 & 75 & 5.52 & \cellcolor[HTML]{DAE8FC}3.21 & 6.43 & \cellcolor[HTML]{DAE8FC}3.48 & 2.1 & \cellcolor[HTML]{DAE8FC}2.1 & 3.52 & \cellcolor[HTML]{DAE8FC}3.41 & 12.9 & \cellcolor[HTML]{DAE8FC}2.94 & 12.9 & \cellcolor[HTML]{DAE8FC}3.57 & 4.32 & \cellcolor[HTML]{DAE8FC}3 & 4.87 & \cellcolor[HTML]{DAE8FC}3.57 \\
\multirow{-5}{*}{\rotatebox{90}{Percentile}} & 90 & 11.7 & \cellcolor[HTML]{DAE8FC}6.23 & 13 & \cellcolor[HTML]{DAE8FC}6.93 & 5.23 & \cellcolor[HTML]{DAE8FC}5.24 & 6.4 & \cellcolor[HTML]{DAE8FC}5.87 & 70.5 & \cellcolor[HTML]{DAE8FC}2.94 & 80 & \cellcolor[HTML]{DAE8FC}3.68 & 11.7 & \cellcolor[HTML]{DAE8FC}5.24 & 12.9 & \cellcolor[HTML]{DAE8FC}6.9 \\ \hline
\multicolumn{2}{l||}{Max} & 74.1 & \cellcolor[HTML]{DAE8FC}75 & 75 & \cellcolor[HTML]{DAE8FC}\uline{59.1 } & 11.7 & \cellcolor[HTML]{DAE8FC}11.1 & 11.6 & \cellcolor[HTML]{DAE8FC}7.23 & 70.5 & \cellcolor[HTML]{DAE8FC}2.94 & 80 & \cellcolor[HTML]{DAE8FC}3.68 & 74.1 & \cellcolor[HTML]{DAE8FC}75 & 80 & \cellcolor[HTML]{DAE8FC}59.1 \\
\multicolumn{2}{l||}{Mean} & 6.42 & \cellcolor[HTML]{DAE8FC}5.38 & 6.93 & \cellcolor[HTML]{DAE8FC}5.35 & 2.38 & \cellcolor[HTML]{DAE8FC}2.25 & 2.74 & \cellcolor[HTML]{DAE8FC}2.28 & 22.4 & \cellcolor[HTML]{DAE8FC}2.19 & 24.9 & \cellcolor[HTML]{DAE8FC}2.78 & 6.4 & \cellcolor[HTML]{DAE8FC}3.95 & 7.04 & \cellcolor[HTML]{DAE8FC}4 \\
\multicolumn{18}{l}{\small PC (PT) = potential improvement in terms of cardinality (query time), AC (AT) = actual improvement in terms of cardinality (query time)}
\end{tabular}
}
\caption{Potential- and minimal actual improvements on \stringdb{} (rounded to 3 significant digits).
Median improvements in terms of query evaluation time (\pottime, \acttime) are \dashuline{significant}.
Maximal improvements in terms of time (\acttime) are \uline{up to an order of magnitude}.
For some instances, \dotuline{more data is processed} by optimized plans without large detriment to evaluation time. (\autoref{sec:experiments-potency},\autoref{sec:experiments-effective})}
\label{tab:improvements_stringdb}
\end{table*}

\paragraph{Conclusion (3): application of the proposed optimizations enables significant speed-ups of up to several orders of magnitude in terms of query evaluation performance over state-of-the-art systems.}
We place the performance figures obtained using the proposed optimizations in the context of the performance achieved by state-of-the-art systems.

\subsection{Methodology and setup}
\label{sec:method_setup}
Some terminology is required to describe the methodology and setup. For an arbitrary query $Q$ and plan $p$ we define:
\begin{itemize}
    \setlength\itemsep{0.5em}
    \item $\uq$ the set of \textit{unoptimized} plans for $Q$ (i.e., plans that do not use any of the proposed optimizations),
    \item $\sq$ the set of \textit{optimized} plans for $Q$ (i.e., plans that use at least one of the proposed optimizations),
    \item $\ebpu \in \uq$ the \textit{estimated} best unoptimized plan,
    \item $\ebpo \in \sq$ the \textit{estimated} best optimized plan,
    \item $\abpu \in \uq$ the best unoptimized plan \textit{in practice},
    \item $\abpo \in \sq$ the best optimized plan \textit{in practice},
    \item $c(p)$ the \textit{total} number of tuples (cardinality) processed during execution of $p$,
    \item $\timeplan{p}$ the \textit{optimization} time (ms) to obtain $p$,
    \item $\timeexec{p}$ the \textit{execution} time (ms) of $p$,
    \item $t(p) = \timeplan{p} + \timeexec{p}$ the \textit{evaluation} time of $p$,
    \item $\dfrac{c(\abpu)}{c(\abpo)}$ the \textit{potential} improvement (\potcard{}) in terms of cardinality (i.e., total number of tuples processed),
    \item $\dfrac{t(\abpu)}{t(\abpo)}$ the \textit{potential} improvement (\pottime{}) in terms of query evaluation time,
    \item $\dfrac{c(\abpu)}{c(\ebpo)}$ the \textit{minimal actual} improvement (\actcard{}) in terms of cardinality, and
    \item $\dfrac{t(\abpu)}{t(\ebpo)}$ the \textit{minimal actual} improvement (\acttime{}) in terms of query evaluation time.
\end{itemize}
The best plan \textit{in practice} is the plan that exhibited the lowest wall-clock evaluation time out of a set of plans. This observation is made by exhaustive executing all plans within a set. The \textit{estimated} best plan is the plan with the lowest estimated cost out of a set of plans, based on some cost model.

By \dquotes{total number of tuples processed} we mean the sum of all numbers of tuples output by operators that generate \textit{new} tuples (i.e., reading edge-triples, performing joins, etc.) while ignoring the tuples output by operators that only \quotes{forward} tuples (i.e., unions, projections, selections, renames, et cetera).

Note that the metrics \actcard{} and \acttime{} are \textit{minimal} in the sense that the improvement of $\ebpo$ over any $p \in \uq$ is at least as large as the improvement over $\abpu$. This makes the baseline a hypothetical system that always picks the best possible plan from $\uq$. Therefore, the setup is \textit{disadvantageous} to our proposed system on this point making the results obtained \textit{conservative}. 

The reason for looking at total cardinality alongside query evaluation time, is that the former offers an \textit{implementation-independent} view on the savings made through the use of the proposed optimizations. That is, if less data is being processed in order to evaluate a query when the optimizations are used (i.e., \potcard{} and \actcard{} are larger than one), this is beneficial regardless of the system on which the evaluation is performed.

We investigate the \potcard{} and \pottime{} metrics for a multitude of queries on two diverse datasets. To show that the proposed optimizations can be applied effectively we look at the time $\timeplan{\ebpo}$ spent to obtain $\ebpo$ for various queries, and we consider the metrics \actcard{} and \acttime{}. Finally, to place the performance of our system with- and without the use of the proposed optimizations in context w.r.t. state-of-the-art systems we investigate the query evaluation time $t(\ebpo)$ for various queries and datasets. Note that $\ebpo$ is defined w.r.t. each system.

The remainder of this section is structured as follows. The workloads, in terms of query templates, instances and datasets are discussed in \autoref{sec:method-workloads}. The experimental environment and the systems under consideration are the subject of \autoref{sec:method-environment} and \autoref{sec:method-systems}. Finally, the experiments supporting each of the three main conclusions are discussed in \autoref{sec:experiments}.

\subsection{Workloads}
\label{sec:method-workloads}
In our experimental study, we adopt the methodology used by state-of-the-art studies~\cite{DBLP:conf/sigmod/YakovetsGG16,DBLP:conf/sigmod/JachietGGL20,arroyuelo2023optimizing}. We define benchmarking workloads by \emph{mining} query instances based on predefined templates from a variety of datasets across diverse application domains. The mined queries were then used to evaluate the performance of the query evaluation engines of the systems under study.

The distinction between query templates and instances are outlined in \autoref{sec:templates_and_instances}, and the query templates used in the experimental evaluation are presented. The datasets from which query instances are mined are the topic of \autoref{sec:datasets}. The experimental environment and competitor systems are discussed in \autoref{sec:method-environment} and \autoref{sec:method-systems}.

\subsubsection{Query templates and instances}
\label{sec:templates_and_instances}
We will use the terms \textit{query template} and \textit{query instance} (or simply template and instance) throughout the experimental section (not to be confused with the templates used for analysis in \autoref{sec:complexity}). A query template is a generic specification of a graph query that lacks constants such as edge labels and filter values. We mine combinations of edge labels and filter values from real property graphs to construct concrete query instances. From the set of query instances mined in this way, we identify a subset of \textit{valid} instances. An instance if valid if and only if it satisfies the following three criteria:
\begin{enumerate}
    \item the instance must have a \textit{non-empty} result on the dataset from which it was mined,
    \item evaluation of the instance must terminate successfully on at least one system, and
    \item evaluation of the instance using $\ebpu$ on our system must take at least one second.
\end{enumerate}
The third criterion excludes queries that are already \quotes{fast} and focuses instead on those queries worth optimizing. Moreover, any performance gains that could be made on queries which are already \quotes{sub-second} are more difficult to attribute wholly to the proposed optimizations. All query instances are converted to \textit{count} queries. That is, instead of materializing the full set of results, we only count the number of results. We do \textit{not} employ optimizations for count queries in particular anywhere in our system. Definitions for the templates are given in \autoref{eq:template_and_base_definitions}.

These templates were chosen because they are \textit{fundamental}, \textit{simple} and \textit{intuitive}. 
They are fundamental because they capture the essence of the navigational queries that are commonly used in practice~\cite{wdbench}.
They are simple because they are composed of few relations and filters. 
They are intuitive because they capture intuitive natural language questions which combine \quotes{connectedness} (i.e., reachability) and conjunction.
For example, the CCC and PCC templates ask for \dquotes{connected components which have something in common} and \dquotes{components which are connected in more than one way}, respectively. 
The variables $l_1, l_2, l_3$ and $c_1$ are placeholders for edge labels and a filter value. An exhaustive list of all query instances used and their specification in SPARQL and SQL can be found here \cite{Ymous_Optimizing_Navigational_Queries}.

\begin{equation}
\begin{aligned}[b]
    & R(s, t) \leftarrow E(s, e, t), P(e, \text{label}, l_1) \\
    & S(s, t) \leftarrow E(s, e, t), P(e, \text{label}, l_2)  \\
    & T(s, t) \leftarrow E(s, e, t), P(e, \text{label}, l_3) \\
    \hfill \\
    & CCC1(x, y, z) \leftarrow R^+(x, y), S(x, z), T(z, y) \\
    & CCC2(x, y, z) \leftarrow R^+(x, y), S(x, z), T(y, z) \\
    & CCC3(x, y, z) \leftarrow R^+(x, y), S(z, x), T(z, y) \\
    & CCC4(x, y, z) \leftarrow R^+(x, y), S(z, x), T(y, z) \\
    \hfill \\
    & PCC2(x, y) \leftarrow R^+(x, y), S^+(x, y) \\
    & PCC3(x, y) \leftarrow R^+(x, y), S^+(x, y), T^+(x, y) \\
    \hfill \\
    &I(x, y) \leftarrow S(x, y), T^+(x, z), z = c_1 \\
    &RQ(x, y, z) \leftarrow R(x, y), I^+(y, z) \\
\end{aligned}
\label{eq:template_and_base_definitions}
\end{equation}

\subsubsection{Datasets}
\label{sec:datasets}
We conduct experiments on two real-world datasets coming from diverse application domains, namely DBPedia's 2020 release (\dbpedia)\cite{DBLP:journals/semweb/LehmannIJJKMHMK15,DBLP:conf/i-semantics/HoferHDF20} and the sub-graph related to humans of the Search Tool for the Retrieval of Interacting Genes/Proteins (\stringdb)\cite{stringdb/gkac1000}. \dbpedia{} is a \textit{knowledge graph} composed of information created in various Wikimedia projects. \stringdb{} is a database of known and predicted protein-protein interactions, which can be searched on a per-organism basis. The full 2020 release of \dbpedia{} contains upwards of $483$M edges and features almost $55$K distinct edge labels. The sub-graph of \stringdb{} pertaining to humans contains just over 1.5M edges (protein-protein interactions)\footnote{\label{footnote:ccc}Protein-protein interactions are symmetric. As a consequence the query templates CCC1, CCC2, CCC3 and CCC4 all default to the same semantic meaning and so are referred to as CCC for \stringdb{}.} and 19.5K nodes (distinct proteins). Humans are the most frequently queried organism in practice on \stringdb's online access point \cite{Web:sting-statistics}. The sub-graph of human proteins is particularly \textit{dense}, which is challenging when leveraging the selectivity of join-predicates.

In order to represent RDF data such as \dbpedia{} as a property graph, we perform the following transformation. Let $D$ be set of RDF-triples of the form $(s, p, o)$ (i.e., subject, predicate, object), and define $S = \bigcup_{(s, p, o) \in D} \{ s \}$ and \added[id=R1,comment={D5}]{$T = \bigcup_{(s, p, o) \in D} \{ o \}$}. Let $\eta: S \cup T \rightarrow \mathbb{N}$ be a function that assigns every subject and object in $D$ a distinct numerical identifier, and let $\zeta: D \rightarrow \mathbb{N}$ be a function that assigns every triple in $D$ a distinct numerical identifier, such that the ranges of $\eta$ and $\zeta$ are disjoint. Then we can construct the sets $E$ and $P$ that comprise a property graph $G = (E, P)$ from $D$ by adding $(\eta(s), \zeta(s, p, o), \eta(o))$ to $E$ and adding $(\zeta(s, p, o), \text{label}, p)$ to $P$ for every $(s, p, o) \in D$.

\subsubsection{Environment}
\label{sec:method-environment}
All systems are deployed in the same environment and evaluation times are averaged over five runs. Plans that take more than a set time limit to complete are timed-out. This time limit is 2 minutes for \stringdb{} and 10 minutes for \dbpedia{}. A lower limit is used for \stringdb{} because the average query evaluation time is higher on \stringdb{} due to its density and as a consequence exploring plan spaces exhaustively with a greater time limit is infeasible in practice\footnote{As a further consequence, exhaustively exploring $\sq$ to find $\abpo$ is infeasible for the RQ template on \stringdb{}.}.
The experiments are performed on a server running Ubuntu 20.04 with two Intel Xeon E5-2697 v2 CPUs (12 cores each, 2 threads per core at 2.7GHz) and 492 GB of RAM. The amount of RAM available to each database system is limited to 320 GB.

\subsubsection{Systems}
\label{sec:method-systems}
We evaluate the workloads on various state-of-the-art systems, namely: \avantgraphu{} (our system with optimizations disabled (\agu)),~\avantgraphs{} (our system with optimizations enabled (\ags)), \millenniumdbv{} (\mdb),~\duckdbv{} (\ddb),~\postgresv{} (\pg), and~\virtuosov{} (\vt). Moreover, we include a version of our system (\avantgraphwg{}, (\agwg{})) which uses only state-of-the-art seeding techniques as captured by Waveguide~\cite{DBLP:conf/sigmod/YakovetsGG16} (i.e., it does \textit{not} use seeding of interior closures nor stacking). We employ these techniques in our system, rather than invoking Waveguide itself for two reasons. First, the available implementation of Waveguide does not support conjunctive queries (but its optimization techniques still apply to conjunctive queries) and second, its evaluation procedure relies on a procedural SQL routine which is a bottleneck on performance. Hence we believe that reproducing Waveguide's optimization techniques on our system presents them in the best possible light.

Taken together these systems provide a representative view on the performance of the query workloads on state-of-the-art systems. Furthermore, we note that \kuzu{} \cite{kuzu:cidr,kuzu-github} does not allow for the expression of transitive closures, we could not obtain binaries for \umbra{} \cite{DBLP:conf/cidr/NeumannF20/umbra} and \tigergraph{} \cite{DBLP:journals/corr/abs-1901-08248/tigergraph} upon request, experienced blocking technical difficulties in running \duckpgq{} \cite{DBLP:conf/cidr/WoldeSSB23/duckpgq,Raasveldt_DuckDB/duckpgq}.

Systems are configured as per vendor recommendations. For the relational database systems \duckdb{} and \postgres{} property graphs are stored as a table of serialized triples (subject (\texttt{S}), predicate (\texttt{P}), object (\texttt{O})) with several indexes (\texttt{SPO}, \texttt{PSO}, \texttt{OSP}).
For the SPARQL-based systems \millenniumdb{} and \virtuoso{} property graphs are stored- and indexed using each system's defaults for graph-structured data. Finally, \avantgraph{} stores property graphs in a compressed sparse row (CSR) format and two types of simple indexes are constructed, namely one index facilitating the retrieval of all edges associated with a key edge label and one index facilitating the retrieval of all edges adjacent to- and associated with a key vertex and edge label.

\subsection{Experiments}
\label{sec:experiments}

\subsubsection{The proposed optimizations are potent}
\label{sec:experiments-potency}

The potential improvements in terms of cardinality (\potcard{}) and 
evaluation time (\pottime{}) are tabulated for \stringdb{} and \dbpedia{} in \autoref{tab:improvements_stringdb} and \autoref{tab:improvements_dbpedia}, respectively. We observe that the potential is high on \stringdb{} for the PCC3 template (median values of 4.25x and 4.87x), and significant for the other templates (up to 1.98x for \pottime{}).
For \dbpedia{}, we observe that the potential is high (median values in the range from 2.77x to 6.57x) for both \potcard{} and \pottime{} on the CCC and RQ templates. Moreover, the potential for improvement is extreme (median values in the range from 118x to 702x) for both \potcard{} and \pottime{} on the PCC2 and PCC3 templates. 

There several instances of the PCC2 and PCC3 templates on both datasets for which no $\abpu$ can be identified (i.e., all plans $p \in \uq$ are timed-out). As a consequence, the necessary metrics cannot be computed for these instances. It is important to note that the proposed optimizations enable the evaluation of many such instances within the time limit, as is shown in \autoref{tab:timeout}.

\begin{table*}[ht]
\centering
\resizebox{\textwidth}{!}{
\begin{tabular}{lrllllllllllllllll}
\multicolumn{2}{r||}{Template} & \multicolumn{4}{l||}{CCC1 (\# instances = 7)} & \multicolumn{4}{l||}{CCC2 (\# instances = 11)} & \multicolumn{4}{l||}{CCC3 (\# instances = 8)} & \multicolumn{4}{l||}{CCC4 (\# instances = 7)} \\
\multicolumn{2}{r||}{Metric} & \potcard & \actcard & \pottime & \multicolumn{1}{l||}{\acttime} & \potcard & \actcard & \pottime & \multicolumn{1}{l||}{\acttime} & \potcard & \actcard & \pottime & \multicolumn{1}{l||}{\acttime} & \potcard & \actcard & \pottime & \multicolumn{1}{l||}{\acttime} \\ \hline \hline
\multicolumn{2}{r||}{Min} & 2.83 & \cellcolor[HTML]{FFF2E3}0.871 & 2.4 & \multicolumn{1}{l||}{\cellcolor[HTML]{FFF2E3}2.31} & 1.05 & \cellcolor[HTML]{FFF2E3}0.582 & 1.41 & \multicolumn{1}{l||}{\cellcolor[HTML]{FFF2E3}0.662} & 1.16 & \cellcolor[HTML]{FFF2E3}0.879 & 1.82 & \multicolumn{1}{l||}{\cellcolor[HTML]{FFF2E3}0.749} & 1.17 & \cellcolor[HTML]{FFF2E3}0.541 & 1.55 & \multicolumn{1}{l||}{\cellcolor[HTML]{FFF2E3}0.741} \\ \hline
 & \multicolumn{1}{r||}{10} & 2.83 & \cellcolor[HTML]{FFF2E3}0.871 & 2.4 & \multicolumn{1}{l||}{\cellcolor[HTML]{FFF2E3}2.31} & 1.8 & \cellcolor[HTML]{FFF2E3}0.808 & 2.5 & \multicolumn{1}{l||}{\cellcolor[HTML]{FFF2E3}1.94} & 1.16 & \cellcolor[HTML]{FFF2E3}0.879 & 1.82 & \multicolumn{1}{l||}{\cellcolor[HTML]{FFF2E3}0.749} & 1.17 & \cellcolor[HTML]{FFF2E3}0.541 & 1.55 & \multicolumn{1}{l||}{\cellcolor[HTML]{FFF2E3}0.741} \\
 & \multicolumn{1}{r||}{25} & 3.43 & \cellcolor[HTML]{FFF2E3}1.23 & 2.57 & \multicolumn{1}{l||}{\cellcolor[HTML]{FFF2E3}2.42} & 2.52 & \cellcolor[HTML]{FFF2E3}0.827 & 2.54 & \multicolumn{1}{l||}{\cellcolor[HTML]{FFF2E3}2.14} & 2.99 & \cellcolor[HTML]{FFF2E3}0.991 & 2.33 & \multicolumn{1}{l||}{\cellcolor[HTML]{FFF2E3}2.3} & 3.22 & \cellcolor[HTML]{FFF2E3}0.828 & 2.53 & \multicolumn{1}{l||}{\cellcolor[HTML]{FFF2E3}2.03} \\
 & \multicolumn{1}{r||}{50} & 6.2 & \cellcolor[HTML]{FFF2E3}1.82 & \dashuline{4.88 } & \multicolumn{1}{l||}{\cellcolor[HTML]{FFF2E3}\dashuline{2.55 }} & 4.02 & \cellcolor[HTML]{FFF2E3}1.3 & \dashuline{4.85 } & \multicolumn{1}{l||}{\cellcolor[HTML]{FFF2E3}\dashuline{4.31 }} & 3.18 & \cellcolor[HTML]{FFF2E3}1.1 & \dashuline{2.77 } & \multicolumn{1}{l||}{\cellcolor[HTML]{FFF2E3}\dashuline{2.41 }} & 3.56 & \cellcolor[HTML]{FFF2E3}1.02 & \dashuline{4.55 } & \multicolumn{1}{l||}{\cellcolor[HTML]{FFF2E3}\dashuline{2.34 }} \\
 & \multicolumn{1}{r||}{75} & 7.94 & \cellcolor[HTML]{FFF2E3}2.81 & 5.11 & \multicolumn{1}{l||}{\cellcolor[HTML]{FFF2E3}2.79} & 22.5 & \cellcolor[HTML]{FFF2E3}9.35 & 147 & \multicolumn{1}{l||}{\cellcolor[HTML]{FFF2E3}21.6} & 3.79 & \cellcolor[HTML]{FFF2E3}1.15 & 4.68 & \multicolumn{1}{l||}{\cellcolor[HTML]{FFF2E3}2.54} & 6.77 & \cellcolor[HTML]{FFF2E3}1.85 & 4.77 & \multicolumn{1}{l||}{\cellcolor[HTML]{FFF2E3}2.42} \\
\multirow{-5}{*}{\rotatebox{90}{Percentile}} & \multicolumn{1}{r||}{90} & 8.1 & \cellcolor[HTML]{FFF2E3}3.24 & 6.28 & \multicolumn{1}{l||}{\cellcolor[HTML]{FFF2E3}4.02} & 87.8 & \cellcolor[HTML]{FFF2E3}24.9 & 160 & \multicolumn{1}{l||}{\cellcolor[HTML]{FFF2E3}34.7} & 19.7 & \cellcolor[HTML]{FFF2E3}8.87 & 100 & \multicolumn{1}{l||}{\cellcolor[HTML]{FFF2E3}21.8} & 10.8 & \cellcolor[HTML]{FFF2E3}1.96 & 5.42 & \multicolumn{1}{l||}{\cellcolor[HTML]{FFF2E3}2.44} \\ \hline
\multicolumn{2}{r||}{Max} & 8.1 & \cellcolor[HTML]{FFF2E3}3.24 & 6.28 & \multicolumn{1}{l||}{\cellcolor[HTML]{FFF2E3}4.02} & 109 & \cellcolor[HTML]{FFF2E3}56.3 & 181 & \multicolumn{1}{l||}{\cellcolor[HTML]{FFF2E3}57.7} & 19.7 & \cellcolor[HTML]{FFF2E3}8.87 & 100 & \multicolumn{1}{l||}{\cellcolor[HTML]{FFF2E3}21.8} & 10.8 & \cellcolor[HTML]{FFF2E3}1.96 & 5.42 & \multicolumn{1}{l||}{\cellcolor[HTML]{FFF2E3}2.44} \\
\multicolumn{2}{r||}{Mean} & 5.73 & \cellcolor[HTML]{FFF2E3}1.95 & 4.15 & \multicolumn{1}{l||}{\cellcolor[HTML]{FFF2E3}2.73} & 22.5 & \cellcolor[HTML]{FFF2E3}9.59 & 47 & \multicolumn{1}{l||}{\cellcolor[HTML]{FFF2E3}12.5} & 5.23 & \cellcolor[HTML]{FFF2E3}2.14 & 15.5 & \multicolumn{1}{l||}{\cellcolor[HTML]{FFF2E3}4.67} & 4.8 & \cellcolor[HTML]{FFF2E3}1.19 & 3.97 & \multicolumn{1}{l||}{\cellcolor[HTML]{FFF2E3}2.1} \\
 & \multicolumn{1}{l}{} &  &  &  &  &  &  &  &  &  &  &  &  &  &  &  &  \\
\multicolumn{2}{r||}{Template} & \multicolumn{4}{l||}{PCC2 (\#instances = 57)} & \multicolumn{4}{l||}{PCC3 (\#instances = 4)} & \multicolumn{4}{l||}{RQ (\#instances = 127)} & \multicolumn{4}{l||}{All (\#instances = 221)} \\
\multicolumn{2}{r||}{Metric} & \potcard & \actcard & \pottime & \multicolumn{1}{l||}{\acttime} & \potcard & \actcard & \pottime & \multicolumn{1}{l||}{\acttime} & \potcard & \actcard & \pottime & \multicolumn{1}{l||}{\acttime} & \potcard & \actcard & \pottime & \multicolumn{1}{l||}{\acttime} \\ \hline \hline
\multicolumn{2}{r||}{Min} & 1.36 & \cellcolor[HTML]{FFF2E3}0.575 & 1.53 & \multicolumn{1}{l||}{\cellcolor[HTML]{FFF2E3}1.03} & 25 & \cellcolor[HTML]{FFF2E3}0.944 & 50.3 & \multicolumn{1}{l||}{\cellcolor[HTML]{FFF2E3}0.675} & 3.06 & \cellcolor[HTML]{FFF2E3}1.02 & 2.31 & \multicolumn{1}{l||}{\cellcolor[HTML]{FFF2E3}0.929} & 1.05 & \cellcolor[HTML]{FFF2E3}0.541 & 1.41 & \multicolumn{1}{l||}{\cellcolor[HTML]{FFF2E3}0.662} \\ \hline
 & \multicolumn{1}{r||}{10} & 17.4 & \cellcolor[HTML]{FFF2E3}2.23 & 19.2 & \multicolumn{1}{l||}{\cellcolor[HTML]{FFF2E3}2.26} & 25 & \cellcolor[HTML]{FFF2E3}0.944 & 50.3 & \multicolumn{1}{l||}{\cellcolor[HTML]{FFF2E3}0.675} & 5.03 & \cellcolor[HTML]{FFF2E3}1.97 & 2.75 & \multicolumn{1}{l||}{\cellcolor[HTML]{FFF2E3}1.09} & 4.02 & \cellcolor[HTML]{FFF2E3}1.1 & 2.74 & \multicolumn{1}{l||}{\cellcolor[HTML]{FFF2E3}1.12} \\
 & \multicolumn{1}{r||}{25} & 65.1 & \cellcolor[HTML]{FFF2E3}2.27 & 75.6 & \multicolumn{1}{l||}{\cellcolor[HTML]{FFF2E3}2.4} & 25 & \cellcolor[HTML]{FFF2E3}0.944 & 50.3 & \multicolumn{1}{l||}{\cellcolor[HTML]{FFF2E3}0.675} & 5.99 & \cellcolor[HTML]{FFF2E3}2.07 & 4.58 & \multicolumn{1}{l||}{\cellcolor[HTML]{FFF2E3}1.16} & 5.99 & \cellcolor[HTML]{FFF2E3}2.06 & 4.62 & \multicolumn{1}{l||}{\cellcolor[HTML]{FFF2E3}1.18} \\
 & \multicolumn{1}{r||}{50} & 141 & \cellcolor[HTML]{FFF2E3}43 & \uline{185 } & \multicolumn{1}{l||}{\cellcolor[HTML]{FFF2E3}\uline{42.9 }} & 118 & \cellcolor[HTML]{FFF2E3}1.89 & \uline{702 } & \multicolumn{1}{l||}{\cellcolor[HTML]{FFF2E3}\dashuline{1.42 }} & 6.57 & \cellcolor[HTML]{FFF2E3}2.11 & \dashuline{4.9 } & \multicolumn{1}{l||}{\cellcolor[HTML]{FFF2E3}\dashuline{1.2 }} & 22.5 & \cellcolor[HTML]{FFF2E3}2.17 & 28.7 & \multicolumn{1}{l||}{\cellcolor[HTML]{FFF2E3}1.32} \\
 & \multicolumn{1}{r||}{75} & 601 & \cellcolor[HTML]{FFF2E3}121 & 623 & \multicolumn{1}{l||}{\cellcolor[HTML]{FFF2E3}132} & 383 & \cellcolor[HTML]{FFF2E3}4.43 & 727 & \multicolumn{1}{l||}{\cellcolor[HTML]{FFF2E3}1.48} & 123 & \cellcolor[HTML]{FFF2E3}2.25 & 95.9 & \multicolumn{1}{l||}{\cellcolor[HTML]{FFF2E3}1.25} & 131 & \cellcolor[HTML]{FFF2E3}24.9 & 144 & \multicolumn{1}{l||}{\cellcolor[HTML]{FFF2E3}34.7} \\
\multirow{-5}{*}{\rotatebox{90}{Percentile}} & \multicolumn{1}{r||}{90} & 2000 & \cellcolor[HTML]{FFF2E3}766 & 1060 & \multicolumn{1}{l||}{\cellcolor[HTML]{FFF2E3}846} & 800 & \cellcolor[HTML]{FFF2E3}4.43 & 882 & \multicolumn{1}{l||}{\cellcolor[HTML]{FFF2E3}2.51} & 144 & \cellcolor[HTML]{FFF2E3}33.6 & 152 & \multicolumn{1}{l||}{\cellcolor[HTML]{FFF2E3}71.5} & 239 & \cellcolor[HTML]{FFF2E3}52.7 & 385 & \multicolumn{1}{l||}{\cellcolor[HTML]{FFF2E3}83.6} \\ \hline
\multicolumn{2}{r||}{Max} & 11300 & \cellcolor[HTML]{FFF2E3}2120 & 6370 & \multicolumn{1}{l||}{\cellcolor[HTML]{FFF2E3}\dotuline{1480 }} & 800 & \cellcolor[HTML]{FFF2E3}4.43 & 882 & \multicolumn{1}{l||}{\cellcolor[HTML]{FFF2E3}2.51} & 217 & \cellcolor[HTML]{FFF2E3}56.8 & 173 & \multicolumn{1}{l||}{\cellcolor[HTML]{FFF2E3}98.9} & 11300 & \cellcolor[HTML]{FFF2E3}2120 & 6370 & \multicolumn{1}{l||}{\cellcolor[HTML]{FFF2E3}1480} \\
\multicolumn{2}{r||}{Mean} & 748 & \cellcolor[HTML]{FFF2E3}230 & 563 & \multicolumn{1}{l||}{\cellcolor[HTML]{FFF2E3}188} & 332 & \cellcolor[HTML]{FFF2E3}2.92 & 590 & \multicolumn{1}{l||}{\cellcolor[HTML]{FFF2E3}1.52} & 57.8 & \cellcolor[HTML]{FFF2E3}8.02 & 51.1 & \multicolumn{1}{l||}{\cellcolor[HTML]{FFF2E3}13} & 234 & \cellcolor[HTML]{FFF2E3}64.7 & 188 & \multicolumn{1}{l||}{\cellcolor[HTML]{FFF2E3}57} \\
\multicolumn{18}{l}{\small PC (PT) = potential improvement in terms of cardinality (query time), AC (AT) = actual improvement in terms of cardinality (query time)}
\end{tabular}
}
\caption{Potential- and minimal actual improvements on DBPedia (rounded to 3 significant digits).
Median improvements in terms of query evaluation time (\pottime, \acttime) range from \dashuline{significant} to \uline{extreme}. The largest actual improvements (\acttime) are \dotuline{up to three order of magnitude.} (\autoref{sec:experiments-potency},\autoref{sec:experiments-effective})}
\label{tab:improvements_dbpedia}
\end{table*}

\begin{figure*}[ht]
    \centering
     \begin{subfigure}[t]{0.32\textwidth}
         \includegraphics[width=\linewidth]{img_context_stringdb_ccc_pcc2_pcc3.pdf}
         \label{fig:context_stringdb}
     \end{subfigure}
     \begin{subfigure}[t]{0.32\textwidth}
         \includegraphics[width=\linewidth]{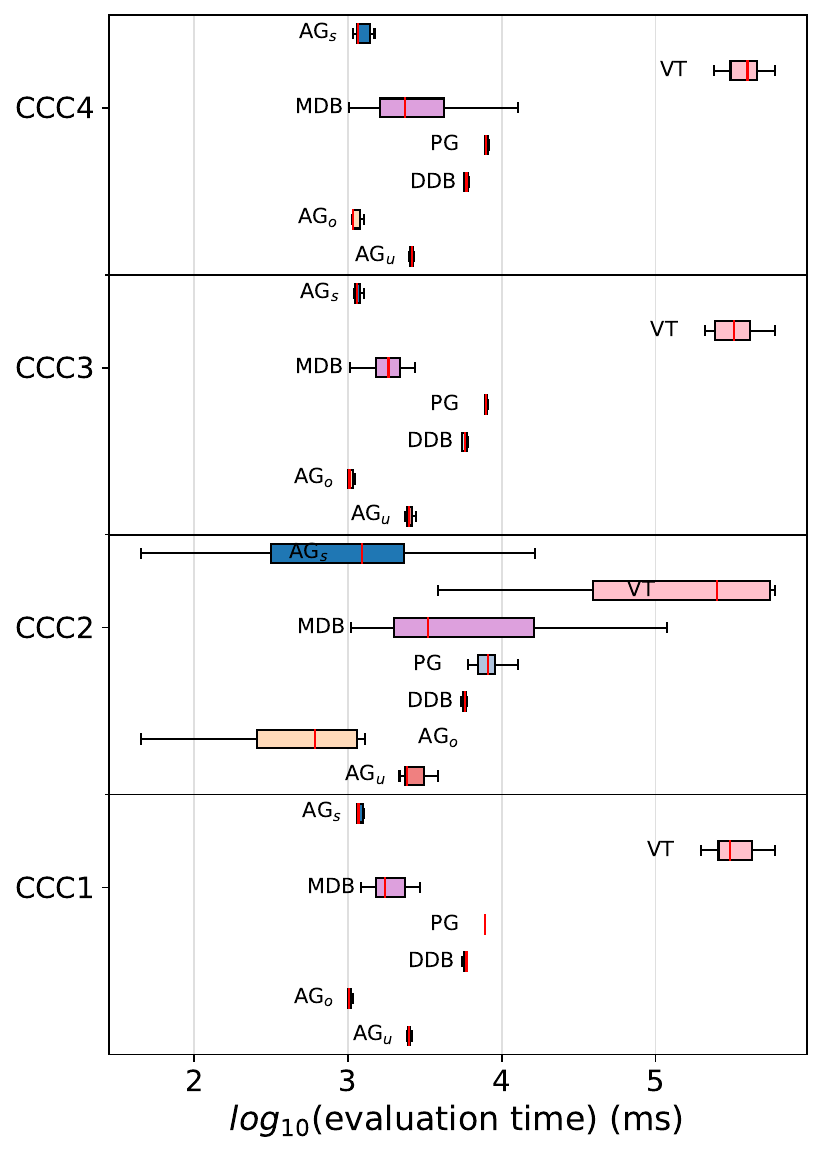}
         \label{fig:context_dbpedia_ccc_rq}
     \end{subfigure}
     \begin{subfigure}[t]{0.32\textwidth}
         \includegraphics[width=\linewidth]{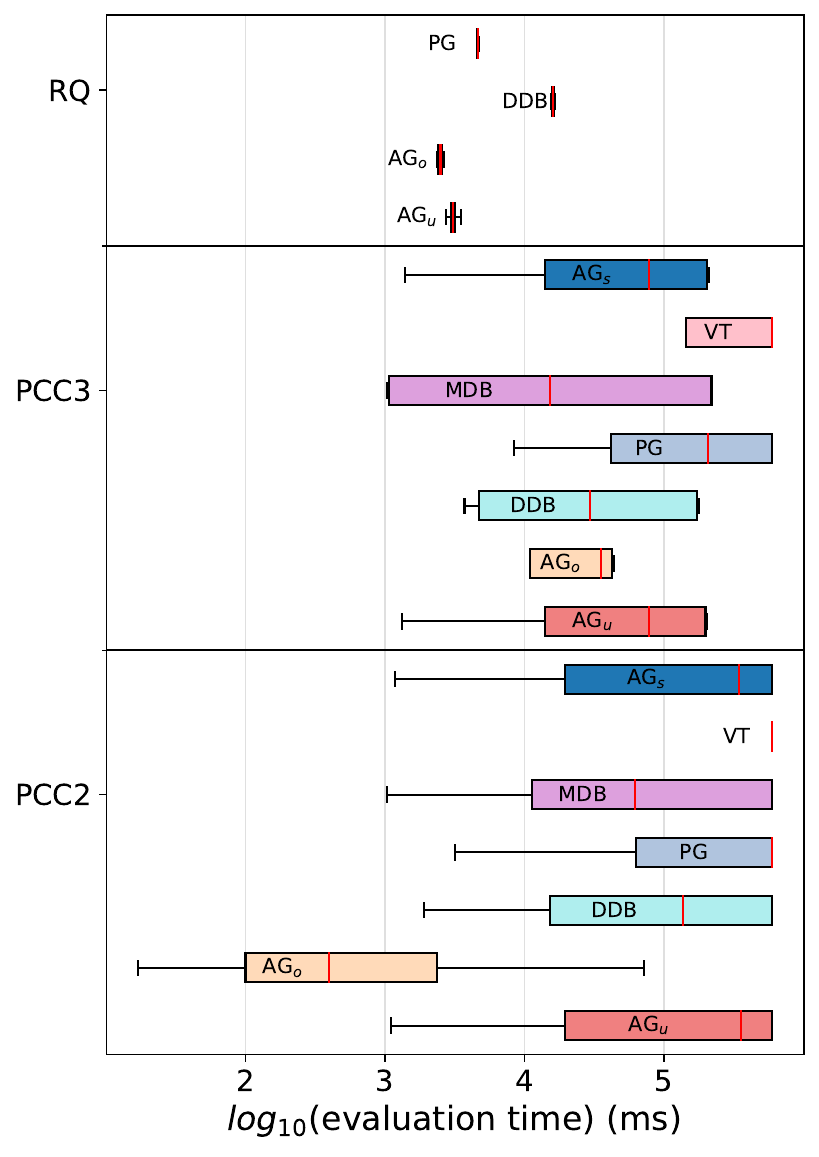}
         \label{fig:context_dbpedia_pcc}
     \end{subfigure}
     \caption{Query evaluation time on \stringdb{} (left) and \dbpedia{} (middle \& right).
     Significant speed-ups of up to several orders of magnitude are realised for most templates (\autoref{sec:experiments-systems}).}
     \label{fig:context}
\end{figure*}

\subsubsection{The proposed optimizations can be applied effectively}
\label{sec:experiments-effective}
The effective application of the proposed optimizations depends on two things. First, application of the optimizations should not be a detriment to the performance of the query optimization procedure itself. Second, the improvement to query evaluation performance achieved through cost-based optimization should be significant. 
\paragraph{The impact on optimization time is modest.}
We report that $\timeplan{\ebpo}$ is below $50$ milliseconds for all query instances of all query templates on both datasets. In other words, all queries considered in this work can be optimized very quickly using the described procedure. To gain further insights into scalability we also plot $\timeplan{\ebpo}$ for particular query \textit{shapes} at various sizes in \autoref{fig:scalability}.

\begin{figure}[ht]
    \centering
     \begin{subfigure}[t]{0.64\columnwidth}
         \centering
         \includegraphics[width=\linewidth]{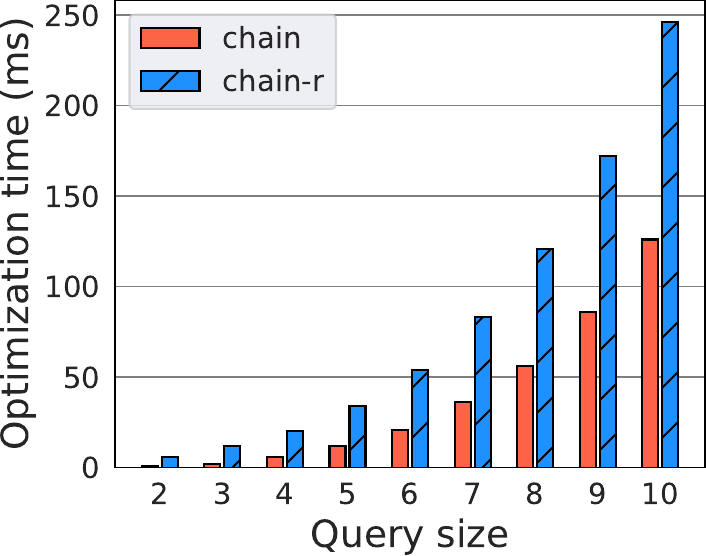}
         \label{fig:scaling_quad}
     \end{subfigure}
     \hfill
     \begin{subfigure}[t]{0.64\columnwidth}
         \centering
         \includegraphics[width=\linewidth]{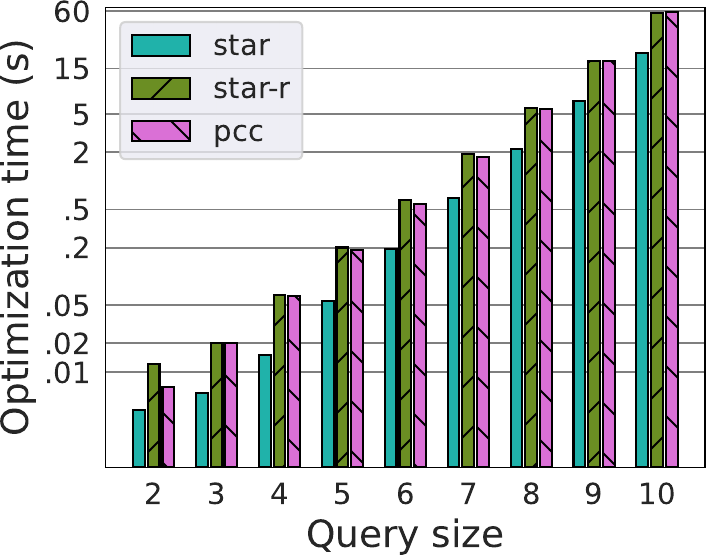}
         \label{fig:scaling_exp}
     \end{subfigure}
     \caption{The optimization time scales well in terms of query shape and size (\autoref{sec:experiments-effective})}
     \label{fig:scalability}
\end{figure}

We consider two well-known query shapes that are extremely common in practice, namely a \textit{chain} query and a \textit{star} query \cite{DBLP:journals/vldb/BonifatiMT20}. For each shape, we consider a version consisting of $n$ non-recursive terms and one (suffixed with \quotes{-r}) consisting of $n$ recursive terms, for values of $n$ from $2$ to $10$. For example, by \dquotes{chain 3} we mean a query of three non-recursive terms arranged in a chain pattern, and by \dquotes{star-r 5} we mean a query of five recursive terms arranged in a star pattern. Additionally, we consider versions of the CCC template but with up to ten terms. We plot the time spent on optimization for each combination of query shape and number of terms. All times are averaged over five runs. We observe that the optimization times for the chain and chain-r shapes increase very slowly as $n$ increases. This is as expected because there exist relatively few viable join orders that avoid Cartesian products for chain queries. The optimization times for the star, star-r and CCC shapes increase at a significantly higher rate, roughly tripling with each increment of $n$ for the larger values of $n$. This too is expected, because the join graph for each of these shapes is a clique and so the number of join orders increases exponentially. We conclude that the enumerator's design scales well since relatively \quotes{long} chain-queries can be optimized in far less than a second, and that even star-queries of up to six recursive terms can be optimized in less than one second.

\paragraph{The improvements to query evaluation performance are significant.} The actual improvements (i.e., those achieved by the estimated best optimized plan $\ebpo$ relative to $\abpu$ the best unoptimized plan in practice) in terms of cardinality (\actcard{}) and evaluation time (\acttime{}) are tabulated in \autoref{tab:improvements_stringdb} and \autoref{tab:improvements_dbpedia}. We emphasize that the actual improvements obtained are \textit{minimal} and \textit{conservative}, due to taking the performance of the best unoptimized plan in practice as a baseline.

We first consider the results on \stringdb{}. The median improvements in terms of query evaluation time range from modest (1.11x) to significant (3.09x) and at its most extreme, a 59-fold speed-up is achieved for the CCC template. We must note that $\abpu$ outperforms $\ebpo$ on a few instances (i.e., either \actcard{} or \acttime{} is below one). In such cases, the optimization procedure simply ends up picking a worse plan than $\abpu$. However, this happens only in a handful of instances.

In summary, the results on \stringdb{} show significant speed-ups in terms of actual query evaluation time up to an order of magnitude while also showing that even larger improvements are possible due to the discrepancies between potential- and actual improvements. In other words, by improving the optimizer's ability to apply the proposed optimizations more effectively (i.e., to make better cost-based decisions) more gains can be made.

On the \dbpedia{} dataset even larger improvements are achieved. The median improvements in terms of query evaluation time range from significant (1.2x) to extreme (42.9x) and at its maximum, a 1480-fold improvement is achieved for the PCC2 template. Again, we note that $\ebpo$ is sometimes outperformed by $\abpo$ but this effect is milder compared to \stringdb{}. Regarding the \quotes{gap} between potential- and actual improvement, we note a particularly large gap for the PCC3 template in \dbpedia{}. We believe that this is, at least in part, due to the greater complexity of these templates. The potential improvement for PCC3 on \stringdb{} is much lower than it is on \dbpedia{} making this less of an issue.

In summary, we observe significant improvements across all templates in query evaluation performance on DBPedia based on the \actcard{} and \acttime{} metrics, of up to three orders of magnitude. We emphasize again that the actual improvements obtained are \textit{minimal} and therefore conservative, due to taking the performance of the best unoptimized plan in practice as a baseline.
\begin{table}[ht]
\centering
\resizebox{0.96\columnwidth}{!}{
\begin{tabular}{ll||ll||llll}
\multicolumn{2}{l||}{Dataset} & \multicolumn{2}{l||}{DBPedia} & \multicolumn{4}{l}{STRING} \\
\multicolumn{2}{l||}{Template} & \multicolumn{2}{l||}{PCC2 (\#57)} & \multicolumn{2}{l}{PCC2 (\#2)} & \multicolumn{2}{l}{PCC3 (\#13)} \\
\multicolumn{2}{l||}{Metric} & $t(\abpo)$ & $t(\ebpo)$ & $t(\abpo)$ & $t(\ebpo)$ & $t(\abpo)$ & $t(\ebpo)$ \\ \hline \hline
\multicolumn{2}{l||}{Min} & 2 & \cellcolor[HTML]{FFF2E3}84 & 69489 & \cellcolor[HTML]{DAE8FC}78798 & 1983 & \cellcolor[HTML]{DAE8FC}2051 \\ \hline
 & 10 & 5 & \cellcolor[HTML]{FFF2E3}86 & 69489 & \cellcolor[HTML]{DAE8FC}78798 & 11926 & \cellcolor[HTML]{DAE8FC}12890 \\
 & 25 & 12 & \cellcolor[HTML]{FFF2E3}99 & 69489 & \cellcolor[HTML]{DAE8FC}78798 & 17637 & \cellcolor[HTML]{DAE8FC}23650 \\
 & 50 & 166 & \cellcolor[HTML]{FFF2E3}\uline{196 } & 69489 & \cellcolor[HTML]{DAE8FC}\dashuline{78798 } & 31673 & \cellcolor[HTML]{DAE8FC}\dashuline{72095 } \\
 & 75 & 1399 & \cellcolor[HTML]{FFF2E3}1534 & 1.2E5 & \cellcolor[HTML]{DAE8FC}\dotuline{1.2E5 } & 70377 & \cellcolor[HTML]{DAE8FC}\dotuline{1.2E5 } \\
\multirow{-5}{*}{\rotatebox{90}{Percentile}} & 90 & 11368 & \cellcolor[HTML]{FFF2E3}20108 & 1.2E5 & \cellcolor[HTML]{DAE8FC}1.2E5 & 94376 & \cellcolor[HTML]{DAE8FC}1.2E5 \\ \hline
\multicolumn{2}{l||}{Max} & 6E5 & \cellcolor[HTML]{FFF2E3}\dotuline{6E5 } & 1.2E5 & \cellcolor[HTML]{DAE8FC}1.2E5 & 110333 & \cellcolor[HTML]{DAE8FC}1.2E5 \\
\multicolumn{2}{l||}{Mean} & 4640 & \cellcolor[HTML]{FFF2E3}6331 & 94745 & \cellcolor[HTML]{DAE8FC}99399 & 46865 & \cellcolor[HTML]{DAE8FC}70589
\end{tabular}
}
\caption{Evaluation times for instances where all \textit{unoptimized} plans timed out.
Optimization yields \uline{sub-second} evaluation for many \dbpedia{} instances and \dashuline{in-time} evaluation for many \stringdb{} instances. Other instances still \dotuline{time out}. The number of instances is denoted (\#$n$) (\autoref{sec:experiments-potency},\autoref{sec:experiments-effective}).}
\label{tab:timeout}
\end{table}

\subsubsection{Query evaluation performance in context}
\label{sec:experiments-systems}
\autoref{fig:context} plots the distributions of query evaluation times using each system's \textit{estimated} best plan across all mined instances, templates and systems in order to place \avantgraph{}'s performance in context. We observe significant speed-ups of up to several orders of magnitude in terms of median evaluation time for most templates across the two datasets. The exception being the PCC3 template. For the DBPedia dataset, the median performance of \avantgraphs{} lags behind that of \duckdb{} on that template. For the \stringdb{} dataset, it lags behind \postgres{}, and on both it lags behind \millenniumdb{}. The primary reason for this appears to be a lack of quality in cost-estimation. That is, based on \autoref{tab:improvements_stringdb} and \autoref{tab:improvements_dbpedia} we know that there exist plans for PCC3 which perform significantly better (up to two order of magnitude better, for DBPedia) than the actual plan being chosen by the optimizer. This can be observed from the difference between the potential improvements (PT and PC) compared to the actual improvements (AT and AC).
Note that the performance of \agwg{} typically lies in-between that of \agu{} and \ags{} as can be expected since \ags{} subsumes \agwg{} which subsumes \agu{} in turn. For some templates, such as CCC1, CCC3 and CCC4 on \dbpedia{} the differnece between \agwg{} and \agu{} is slight (i.e., seeding of interior closures and stacking does not appear to have much of an effect here). Conversely, for the CCC2 and PCC2 templates on \dbpedia{} and all templates on \stringdb{} the difference is far more pronounced (i.e., these novel seeding-based optimizations make a significant difference)

\section{Related work}
\label{sec:related}
Navigational queries are a key class of queries for graph database systems, and while the problem of optimizing their evaluation has received considerable attention in recent years, many systems still struggle with the evaluation of such queries in practice \cite{wdbench}.

The most basic building blocks of navigational graph queries, i.e., path queries in the form of RPQs and sub-graph matching queries have received considerable attention on their own. The research on the optimization of RPQs is characterized by the application of indexing- ~\cite{arroyuelo2023optimizing,DBLP:conf/icde/ArroyueloHNR22,DBLP:conf/icde/MeimarisPMA17,DBLP:conf/grades/TetzelVPL17,DBLP:conf/sigmod/GubichevBS13} and other graph-compression techniques~\cite{DBLP:conf/icde/NaMYWH22}. \added[id=R1,comment={D4}]{Research on sub-graph matching queries has instead focussed on join (order) optimization~\cite{aimonier23,Arch22Souffle} and the application of worst-case optimal join algorithms~\cite{DBLP:conf/semweb/HoganRRS19,Nguyen15WCO}.}

While evaluating navigational queries certainly introduces a set of challenges, it also introduces several opportunities. One such opportunity is the concept of \textit{seeding}, as introduced by Waveguide \cite{DBLP:conf/sigmod/YakovetsGG16}. Waveguide allows for transitive closures to be computed from a set of vertices that is \textit{constrained} by other parts of the query, such as joins or filters, thus potentially reducing the total number of tuples manipulated in the evaluation of the transitive closure. A drawback of Waveguide is its representation of query plans as \textit{automata}, for which it is practically infeasible to integrate them within the more conventional \textit{tree-based} model of query plans in use by the majority of query evaluation engines.

The $\mu$-RA introduced in \cite{DBLP:conf/sigmod/JachietGGL20} generalizes some of Waveguide's optimization techniques, and applies them to UCRPQs rather than RPQs. It extends the Relational Algebra (RA) with a \textit{fix-point} operator $\mu$ that is amendable to having various operations such as \mbox{(anti-)projection}, filters, joins and even other fix-points \textit{pushed} down to it. These transformations are presented algebraically and their application is performed in an ad-hoc manner, that is not guaranteed to consider all semantically equivalent query plans induced by the set of transformations.

We improve upon Waveguide specifically by presenting a unified, \textit{graph-based} model for query plans that can capture all optimization techniques previously represented by automata while generalizing the more conventional tree-based model. We improve upon the $\mu$-RA specifically by addressing the problem of enumerating query plans using a \textit{constructive, rule-based} and \textit{top-down} approach that offers modularity and extensibility. We improve upon both existing approaches by introducing novel forms of seeding and by considering RQs rather than RPQs or UCRPQs.

Our top-down, rule-based approach to query optimization bears some similarity to more general program optimization techniques such as equality saturation in frameworks such as Egg \cite{eggopt}.
Because these techniques and frameworks apply to program optimization in general, rather than query optimization in particular, there is no assumption of optimal sub-structure regarding the composition of solutions based only on the \textit{optimal} solutions to sub-problems. As a consequence, such approaches are infeasible for the large search spaces that are typically encountered in query optimization. A further, shallow similarity exists between seeding-based optimization and \textit{demand transformation} in Datalog programs. This similarity extends only so far as the use of \textit{constants} in Datalog programs is concerned, where it leverages them in order to avoid redundant facts being derived during evaluation~\cite{DBLP:journals/tplp/TekleL16}.

On a meta-level, where the improvements of a query optimizer is concerned (i.e., the optimization of an optimizer), there is existing related work on the join-ordering problem in SPARQL queries~\cite{DBLP:conf/edbt/Gubichev014} based on \textit{characteristic sets} which were originally developed in the context of cardinality estimation~\cite{DBLP:conf/icde/NeumannM11}.

\section{Conclusions}
\label{sec:conclusion}

We have presented a comprehensive study of a novel approach to optimization of navigational graph queries. 
In addition to strictly subsuming known optimization techniques, our method introduces a number of novel powerful optimizations with the goal to effectively enable practical evaluation of complex navigational queries.
We have shown that our approach is able to significantly (up to several orders of magnitude) outperform state-of-the-art query evaluation techniques on a wide range of queries on diverse data sets.
We have shown both analytically and empirically that our approach is able to achieve this performance improvement at a small additional cost of query optimization time.
Future work includes improving cost estimation and extending our approach to support more expressive navigational query languages.
\begin{figure}[ht]
     \centering
     \begin{subfigure}{0.97\columnwidth}
         \centering
         \includegraphics[width=\linewidth]{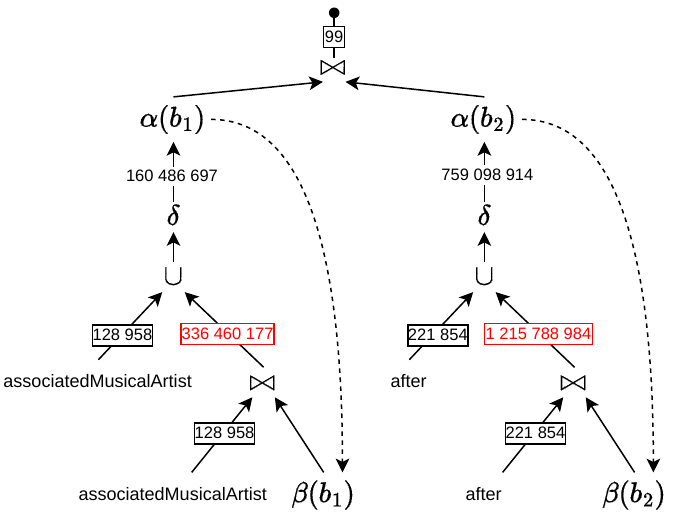}
         \label{fig:pcc2_sota_plan}
     \end{subfigure}
     \hfill
     \begin{subfigure}{0.97\columnwidth}
         \centering
         \includegraphics[width=\linewidth]{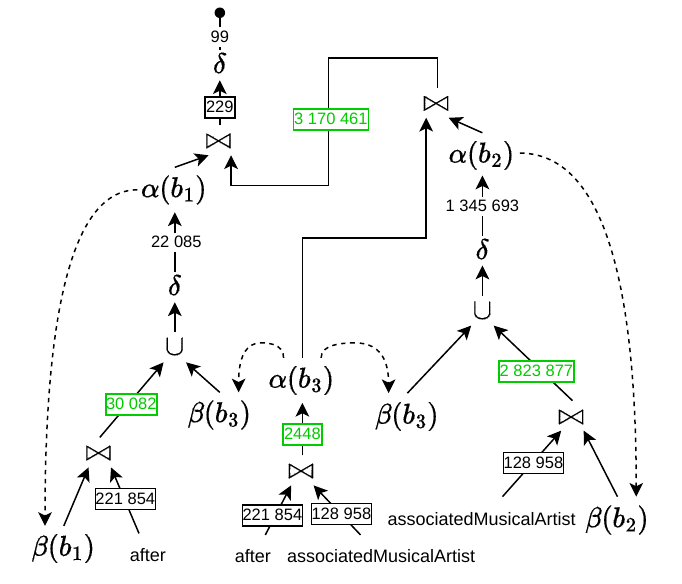}
         \label{fig:pcc2_novel_plan}
     \end{subfigure}
    \caption{Plans $\abpu$ (top) and $\abpo$ (bottom) for a PCC2 instance on \dbpedia{} (\autoref{sec:use_case_study})}
    \label{fig:pcc2_plans}
\end{figure}

\appendix

\section{DBPedia use-case Study}
\label{sec:use_case_study}

To study the potency of seeding interior closures in more detail, we consider an instance $Q$ of query template PCC2 on DBPedia with $l_1$ and $l_2$ set to \dquotes{associatedMusicalArtist} (AMA) and \dquotes{after} (AFT).
This instance asks for pairs of people for whom there exist two different paths (i.e., paths composed of edges with different labels). Edges labeled AMA capture loose associations between musicians, whereas edges labeled AFT capture a generic notion of succession between people, to some position or role. Hence, this instance asks for those pairs of people that have some (in)direct musical association and the latter (in)directly succeeded the former in some capacity. One embedding of this query instance in DBPedia, including the intermediary \quotes{hidden} vertices in each path, is displayed in \autoref{fig:pcc2_embedding}. This embedding captures the fact that Chad Channing, Dale Crover, Dan Peters and Dave Grohl are successive drummers for the band Nirvana and that there exists a sequence of loose musical associations between Chad and Dave spanning fifteen other musicians.

We exhaustively enumerate the plan spaces $\uq$ and $\sq$ and execute all plans within them. This reveals the actual best plans $\abpu \in \uq$ and $\abpo \in \sq$. These plans are displayed in \autoref{fig:pcc2_plans}. Most edges between operators are annotated with the cardinality of the corresponding intermediate result. Annotations with solid borders are used for the edges outgoing from read- and join operators. These cardinalities contribute to the total number of tuples  processed. The other cardinalities (i.e., without a border) do not contribute.

The \textit{seeding query}'s cardinality in $\abpo$ is only 2448, whereas the cardinalities of the sets of AMA and AFT edges are $128\,958$ and $221\,854$. Because both closures in $\abpo$ are computed from a seed, the result sizes for the seeded closures drop from hundreds of millions to millions for the AMA closure and from billions to tens of thousands for the AFT closure. 

The total number of tuples processed is $1\,552\,950\,884$ for $\abpu$ while it is merely $6\,728\,721$ for $\abpo$. This results in an evaluation time of $441\,674$ milliseconds (almost 7.5 minutes) for $\abpu$ and $1225$ milliseconds for $\abpo$.
\begin{figure}[ht]
    \centering
    \includegraphics[width=0.88\columnwidth]{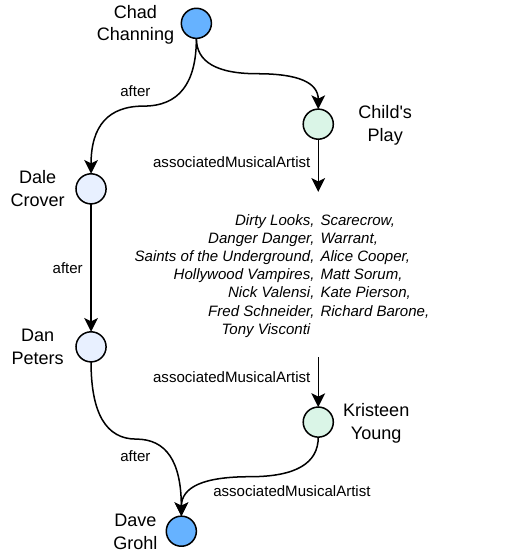}
    \caption{An embedding of PCC2 in DBPedia.}
    \label{fig:pcc2_embedding}
\end{figure}

\begin{acknowledgements}
This project has received funding from the European Union's Horizon Europe framework programme under grant agreement No.~101058573.
\end{acknowledgements}

\bibliographystyle{spmpsci}
\bibliography{bibliography}

\end{document}